\titlespacing{\paragraph}{0pt}{5pt}{5pt}
\theoremstyle{plain}
\newtheorem{theorem} {Theorem} [section]
\newtheorem{lemma} [theorem] {Lemma}
\newtheorem{observation} {Observation}
\theoremstyle{definition}
\DeclareMathOperator*{\minimize}{minimize}
\DeclareMathOperator{\argmin}{argmin}
\DeclarePairedDelimiter\floor{\lfloor}{\rfloor}
\newcommand{\mfc}{\textsc{Metric Forest Completion}}
\title{Approximate Tree Completion and Learning-Augmented Algorithms for Metric Minimum Spanning Trees\thanks{This work was performed under the auspices of the U.S. Department of Energy by Lawrence Livermore National Laboratory under Contract DE-AC52-07NA27344 (LLNL-ABS-XXXXXX), and was supported by LLNL LDRD project 24-ERD-024.}}
\author[1]{Nate Veldt}
\author[1]{Thomas Stanley}
\author[2]{Benjamin W. Priest}
\author[2]{Trevor Steil}
\author[2]{Keita Iwabuchi}
\author[2]{T.S.~Jayram}
\author[2]{Geoffrey Sanders}
\affil[1]{Department of Computer Science and Engineering, Texas A\&M University}
\affil[2]{Center for Applied Scientific Computing, Lawrence Livermore National Laboratory}
\date{}
\begin{document}
	
	\maketitle
\begin{abstract}
Finding a minimum spanning tree (MST) for $n$ points 
in an arbitrary metric space is a fundamental primitive for hierarchical clustering and many other ML tasks, but this takes $\Omega(n^2)$ time to even approximate.
We introduce a framework for metric MSTs that first (1) finds a forest of disconnected components using practical heuristics, and then (2) finds a small weight set of edges to connect disjoint components of the forest into a spanning tree. 
We prove that optimally solving the second step still takes $\Omega(n^2)$ time, but we provide a subquadratic 2.62-approximation algorithm. In the spirit of learning-augmented algorithms, we then show that if the forest found in step (1) overlaps with an optimal MST, we can approximate the original MST problem in subquadratic time, where the approximation factor depends on a measure of overlap. In practice, we find nearly optimal spanning trees for a wide range of metrics, while being orders of magnitude faster than exact algorithms.
\end{abstract}

\section{Introduction}
\label{sec:intro}
Finding a minimum spanning tree of a graph is a classical combinatorial problem with well-known algorithms dating back to the early and mid 1900s~\cite{boruvka1926jistem,prim1957shortest,kruskal1956shortest}. A widely-studied special case in theory and practice is the \emph{metric} MST problem, where each node corresponds to a point in a metric space and every pair of points defines an edge with weight equal to the distance between points. Finding a metric MST has widespread applications including network design~\cite{loberman1957formal}, approximation algorithms for traveling salesman problems~\cite{held1970traveling}, and feature selection~\cite{labbe2023dendrograms}. The problem also has a very well-known connection to hierarchical clustering~\cite{gower1969minimum} and has been used as a key step in clustering astronomical data~\cite{barrow1985minimal,march2010fast}, analyzing gene expression data~\cite{xu2002clustering}, document clustering~\cite{xu19972d}, and various image segmentation and classification tasks~\cite{xu19972d,an2000fast,la2022ocmst}.

This paper is motivated by challenges in efficiently computing metric MSTs in modern machine learning (ML) and data mining applications. The most fundamental challenge is simply the massive size of modern datasets. In theory one can always find an MST for $n$ points by computing all $O(n^2)$ distances and running an existing MST algorithm. However, this quadratic complexity is far too expensive both in terms of runtime and memory for massive datasets. 
A second challenge is handling complicated metric spaces. Most existing algorithms for metric MSTs are designed for Euclidean distances or other simple metric spaces, often with a particular focus on small-length feature vectors~\cite{march2010fast,agarwal1990euclidean,shamos1975closest,vaidya1988minimum,arya2016fast,wang2021fast}.
While useful in certain settings, these are limited in their applicability to high-dimensional feature spaces and complex distance functions. Many modern ML tasks focus on non-Cartesian data (e.g., videos, images, text, nodes in a graph, or even entire graphs) which must be classified, clustered, or otherwise compared, possibly after being embedded into some metric space. In these settings, even querying the distance score between two data points becomes non-trivial and often involves some level of uncertainty. 
Computing distances for popular non-Euclidean metrics like Levenshtein distance or graph kernels is far more expensive than computing Euclidean distances.
This motivates a body of research on minimizing the number of queries needed to find a minimum spanning tree~\cite{bateni2024metric,erlebach2022learning,hoffman2008computing,megow2017randomization}. 
%

These applications and challenges motivate new approaches for efficiently finding good spanning trees for a set of points in an arbitrary metric space. Ideally, we would like an algorithm whose memory, runtime, and distance query complexity are all $o(n^2)$, while still being able to find spanning trees with strong theoretical guarantees in arbitrary metric spaces. 
Unfortunately,
there are known hardness results that pose challenges for obtaining meaningful subquadratic algorithms. In particular, it is known that finding any constant factor approximation for an MST in an arbitrary metric space requires 
knowing $\Omega(n^2)$ edges in the underlying metric graph~\cite{indyk1999sublinear}. Overcoming this inherent challenge requires exploring additional assumptions and alternative types of theoretical approximation guarantees.

\paragraph{The present work: metric forest completion.} 
To achieve our design goals and overcome existing hardness results, we take our inspiration from the nascent field of learning-augmented algorithms, also known as algorithms with predictions~\cite{mitzenmacher2022algorithms}. The learning-augmented paradigm assumes access to an ML heuristic that provides a prediction or ``warm start'' that is useful in practice but does not come with a priori theoretical guarantees. The goal is to design an algorithm that (1) comes with improved theoretical guarantees when the ML heuristic performs well, and (2) recovers similar worst-case guarantees if the ML heuristic performs poorly. Improved theoretical guarantees can take various forms, including faster runtimes (e.g., for sorting~\cite{bai2023sorting}, binary search~\cite{lin2022learning} or maximum $s$-$t$ flows~\cite{davies2023predictive}), improved approximation factors (e.g., for NP-hard clustering problems~\cite{ergun2022learning,nguyen2023improved}), better competitive ratios for online algorithms (e.g., for ski rental problems~\cite{shin2023improved}), or some combination of the above (e.g., better trade-offs for space requirements vs.\ false positive rates for Bloom filters~\cite{kraska2018case}). These improved guarantees are given in terms of some parameter measuring the quality of the prediction, which is typically not known in practice but provides a concrete measure of error that can be used in theoretical analysis.

In this paper we specifically assume access to fragments of a spanning tree for a metric MST problem (called the \emph{initial forest}), that takes the form of a spanning tree for each component of some partitioning of the data objects. This input can be interpreted as a heuristic approximation for the forest that would be obtained by running a few iterations of a classical MST algorithm such as Kruskal's~\cite{kruskal1956shortest} or Boruvka's~\cite{boruvka1926jistem}. Given this input, we formalize the \textsc{Metric Forest Completion} problem (MFC), whose goal is to find a minimum-weight set of edges that connects disjoint components to produce a full spanning tree.
Although optimally solving MFC takes $\Omega(n^2)$ distance queries, we design a subquadratic approximation algorithm for MFC and prove a learning-augmented style approximation guarantee for the original metric MST problem. To summarize, we have the following contributions. 
\begin{itemize}[leftmargin=10pt,itemsep=0pt]
	\item \textbf{New algorithmic framework.} We introduce \textsc{Metric Forest Completion} for large-scale metric MST problems, along with strategies for computing an initial forest and a discussion of how our problem fits into the recent framework of learning-augmented algorithms.
	\item \textbf{Approximate completion algorithm.}  We prove that optimally solving MFC requires $\Omega(n^2)$ edge queries, but provide an approximation algorithm with approximation factor $\approx 2.62$. Our algorithm has a query complexity of $o(n^2)$ as long as the initial forest has $o(n)$ components. 
	\item \textbf{Learning-augmented approximation guarantees.} We prove that if the initial forest $\gamma$-overlaps with an optimal MST, then our algorithm is a $(2\gamma+1)$-approximation algorithm for the metric MST problem. If $\gamma = 1$, this means all edges in the forest are contained in an optimal MST. A precise definition for $\gamma > 1$ is given in Section~\ref{sec:mfc}.
	\item \textbf{Experiments.} We show that our method is extremely scalable and obtains very good results on synthetic and real datasets. We also show that simple heuristics provide initial forests with $\gamma$-overlap values that are typically smaller than 2 on various datasets and distance functions, including many non-Euclidean metrics.
\end{itemize}
	
\section{Technical Preliminaries}
\label{sec:prelims}
We cover several technical preliminaries to set the stage for our new MFC problem and algorithms. 

\subsection{Graph notation and minimum spanning trees.}
For an undirected graph $G = (V,E)$ and weight function $w \colon E \rightarrow \mathbb{R}^+$, we denote the weight of an edge $e = (u,v)$ as $w(u,v)$, $w_{uv}$, or $w_e$. 
The weight of an edge set $F \subseteq E$ is denoted by 
 \begin{equation}
 	\label{eq:edgesum}
 	w(F) = \sum_{e \in F} w_e.
 \end{equation}
We often use $w(G) = w(E)$
to denote the weight of all edges in $G = (V,E)$. 

\paragraph{Minimum spanning trees.} 
The minimum spanning tree problem on $G$ with respect to weight function $w$ seeks a spanning tree $T = (V,E_T)$ of $G$ that minimizes $w(E_T)$. When $w$ is clear from context we will refer to $T$ simply as an MST of $G$. We do often consider multiple spanning trees of the same graph, each optimal for a different weight function. In these cases we  explicitly state the weight function associated with an MST.

There are many well-known greedy algorithms for optimally constructing an MST~\cite{kruskal1956shortest, prim1957shortest,boruvka1926jistem}. 
We review Kruskal's~\cite{kruskal1956shortest} and Boruvka's~\cite{boruvka1926jistem} algorithms as their mechanics are relevant for understanding our approach and results. These methods first place all nodes into singleton components. Each iteration of Kruskal's algorithm identifies a minimum weight edge that connects two nodes in different components, and adds it to a forest of edges (initialized to the empty set at the outset of the algorithm) that is guaranteed to be part of some MST. This proceeds until all components of the forest have been merged into one tree. The method is equivalent to ordering all edges based on weight and visiting them in order, greedily adding the $i$th edge to the spanning tree if and only if its endpoints are in different components. Boruvka's algorithm~\cite{boruvka1926jistem} is similar, but instead identifies a minimum weight edge incident to  \emph{each} component, adding all of them to the growing forest. Both algorithms can be implemented to run in $O(m \log n)$ time where $n = |V|$ and $m = |E|$. Faster algorithms, often based on one of these algorithms, have been developed. The fastest deterministic algorithm has a runtime of $O(m \cdot \alpha(m,n))$ where $\alpha$ is the inverse of the Ackerman function~\cite{chazelle2000minimum}. For this paper it suffices to know that finding an MST takes $\tilde{O}(m)$ time where $\tilde{O}$ hides factors that are logarithmic (or smaller) in $n$.

\subsection{The metric MST problem}
Throughout the paper we let $(\mathcal{X},d)$ be a finite metric space where $\mathcal{X} = \{x_1, x_2, \hdots , x_n\}$ is a set of data points and $d(x_i, x_j)$ is the distance between the $i$th and $j$th points.  
In order to be a metric space, $d$ must satisfy the triangle inequality: $d(x_{i}, x_{j}) \leq d(x_{i}, x_{k}) + d(x_{k}, x_{j})$ for all triplets $i,j,k \in [n] = \{1,2, \hdots n\}$. Aside from this we make no formal assumptions about the points in $\mathcal{X}$ or the distance function $d$.
The space is associated with a complete graph $G_\mathcal{X} = (\mathcal{X}, E_\mathcal{X})$ where we treat $\mathcal{X}$ as a node set and the edge set $E_\mathcal{X} = {\mathcal{X} \choose 2}$ includes all pairs of nodes. 
The weight function $w_\mathcal{X}$ for $G_\mathcal{X}$ is defined by $w_\mathcal{X}(i,j) = d(x_i, x_j)$. This graph $G_\mathcal{X}$ is implicit; in order to know the weight of an edge we must query the distance function $d$. We make no assumptions regarding the complexity of querying distances. Rather, we will give the complexity of our algorithms both in terms of the runtime and number of queries. 

The \emph{metric} minimum spanning tree problem is simply the MST problem applied to $G_\mathcal{X}$ (see Figure~\ref{fig:metric_mst}). This can be solved by querying the distance function $O(n^2)$ times to form $G_\mathcal{X}$, and then applying an existing MST algorithm. 
We show it is more practical to implicitly deal with $G_\mathcal{X}$ via edge queries, while solving different types of nearest neighbor problems over subsets of $\mathcal{X}$.
\begin{figure}[t]
	\centering
	\begin{subfigure}[b]{0.32\textwidth}
		\centering
		\includegraphics[width=\textwidth]{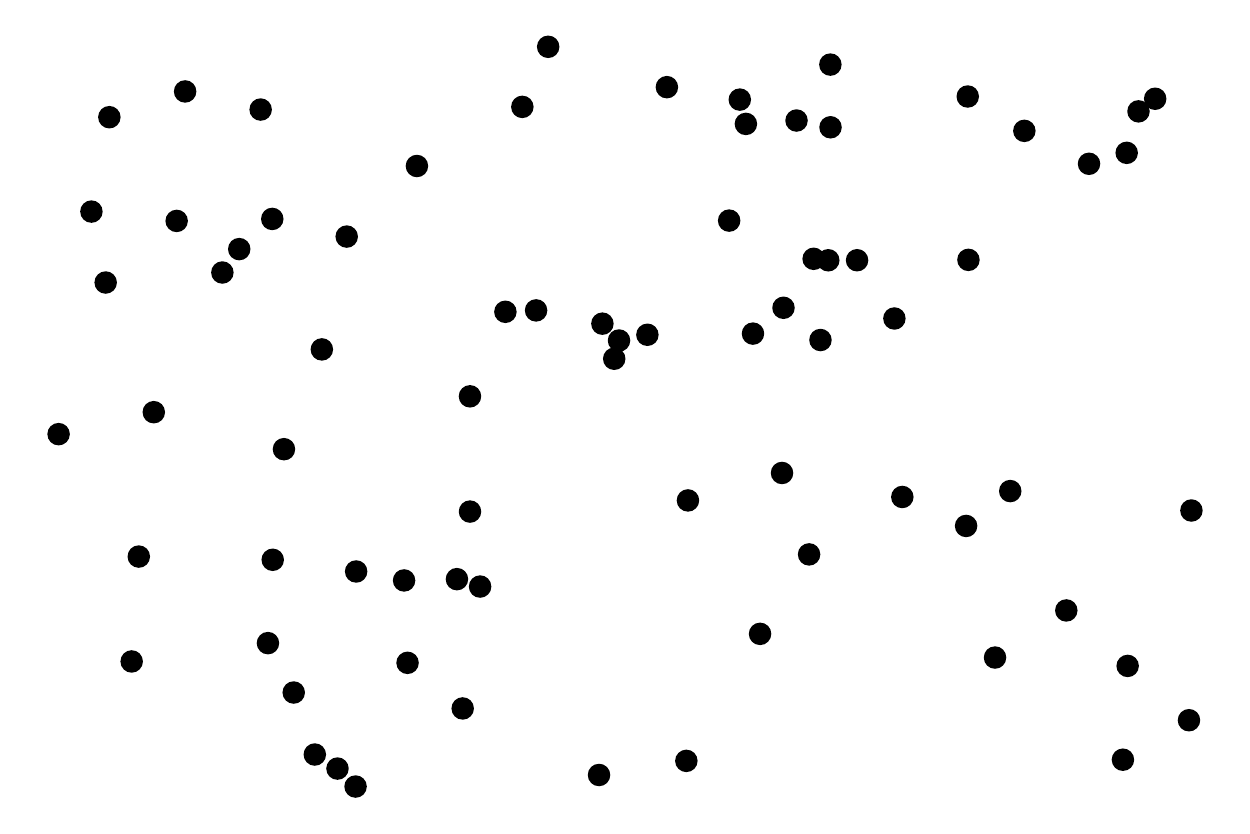}
		\caption{Initial space $\mathcal{X}$}
		\label{fig:opt_mstc}
	\end{subfigure}
	\begin{subfigure}[b]{0.33\textwidth}
		\centering
		\includegraphics[width=\textwidth]{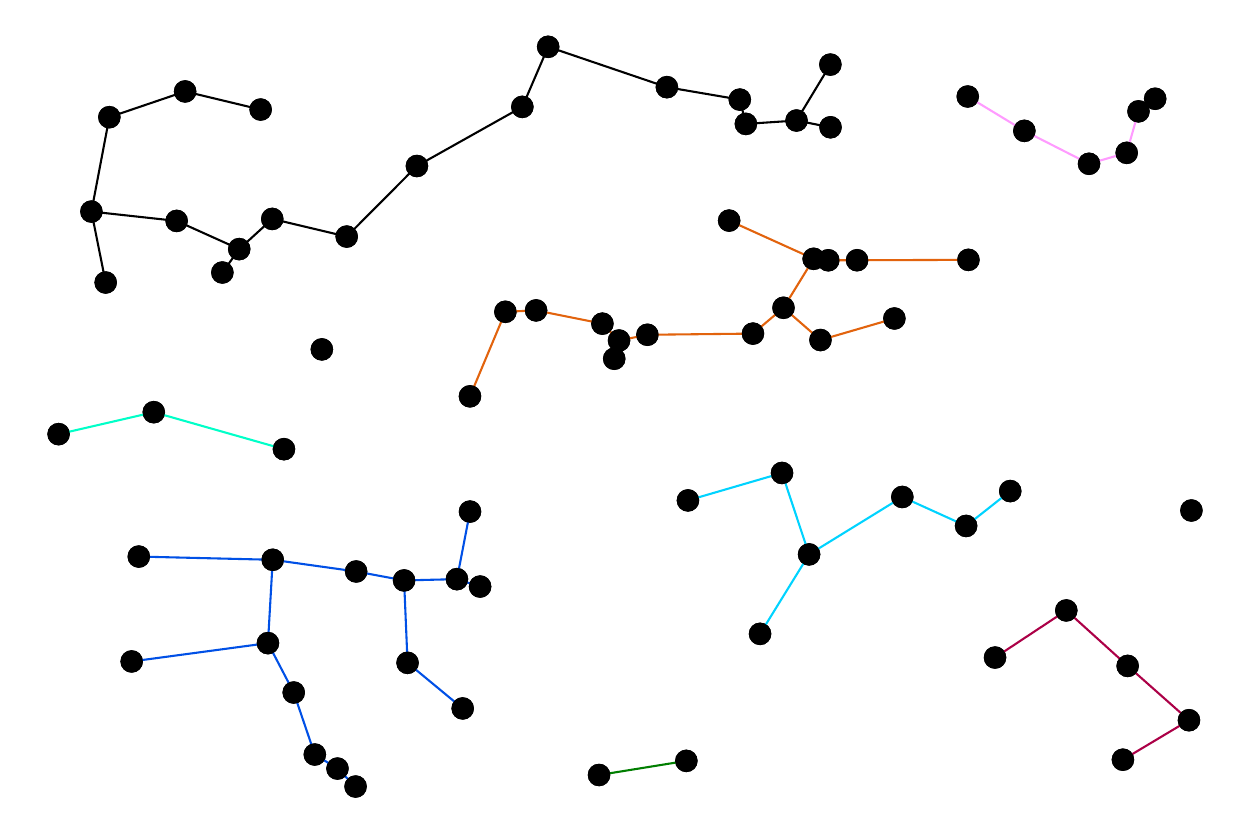}
		\caption{Forest from Kruskal's}
		\label{fig:partial_MST_1}
	\end{subfigure}
	\begin{subfigure}[b]{0.33\textwidth}
		\centering
		\includegraphics[width=\textwidth]{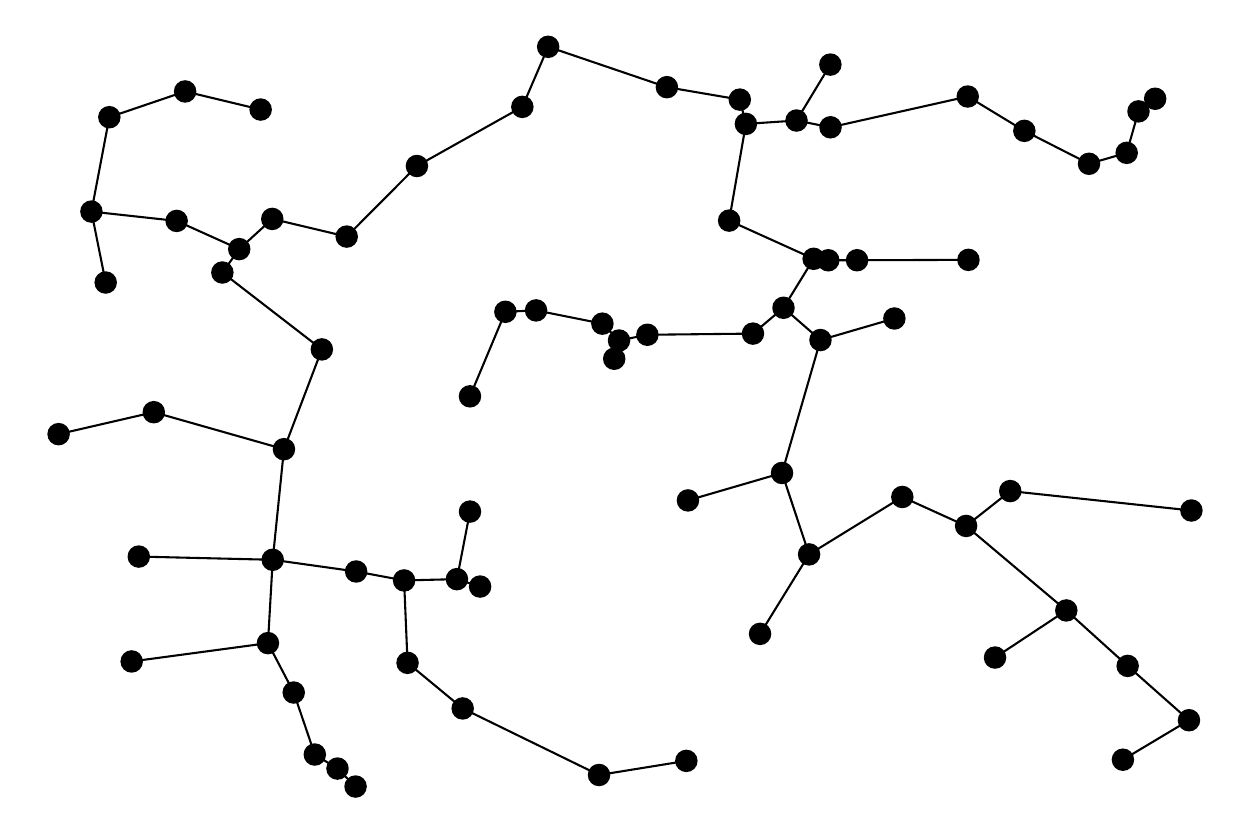}
		\caption{Full metric MST}
		\label{fig:true_MST_1}
	\end{subfigure}
	\caption{(a)~Consider a simple example of a finite metric space $(\mathcal{X},d)$: 75 points in $\mathbb{R}^2$ equipped with Euclidean distance. $G_\mathcal{X}$ is an implicit complete graph obtained by computing distances between all pairs of points. (b)~Kruskal's algorithm iteratively merges components in a growing forest. Here we display the forest at an intermediate step. Knowing the minimum distance between two different components requires solving a bichromatic closest pair problem. (c)~Continuing until all components are merged produces a metric minimum spanning tree.}
	\label{fig:metric_mst}
\end{figure}

\paragraph{Metric MSTs and bichromatic closest pairs.} 
The \emph{bichromatic closest pair problem} (BCP) is a particularly relevant computational primitive for metric MSTs. The input to BCP is two sets of points $A$ and $B$ in a metric space. The goal is to find a pair of opposite-set points (one from $A$ and one from $B$) with smallest distance. One can implicitly apply a classical MST algorithm such as Kruskal's or Boruvka's algorithm to $G_\mathcal{X}$ by repeatedly solving BCP problems. In each step, these algorithms must identify one or more disconnected components in an intermediate forest (see Figure~\ref{fig:partial_MST_1}) to join via minimum weight edges. Finding a minimum weight edge between two components exactly corresponds to a BCP problem. This connection between metric MSTs and BCP is well known and has been leveraged in many prior works on metric MSTs~\cite{agarwal1990euclidean,callahan1993faster,narasimhan2001geometric,chatterjee2010geometric}.

\section{Metric Forest Completion}
\label{sec:mfc}
We now formalize our \mfc{} (MFC) framework\footnote{This is distinct from two other MST-related concepts that also use the acronym MFC; see Section~\ref{sec:related} for details.}
which assumes access to an initial forest that is then grown into a full spanning tree.

 \subsection{Formalizing the MFC problem}
\label{sec:definemfc}
As a starting point for the metric MST problem on $(\mathcal{X},d)$, we assume access to a partitioning $\mathcal{P} = \{P_1, P_2, \hdots , P_t\}$ where $\mathcal{X} = \bigcup_{i = 1}^t P_i$ and $P_i \cap P_j = \emptyset$ for $i \neq j$. For each component $P_i$ we have a partition spanning tree $T_i = (P_i, E_{T_i})$ for that component.
See Figure~\ref{fig:warmstart} for an illustration. Let $G_t = (\mathcal{X}, E_t)$ represent the union of these trees, which has the same node set as $G_\mathcal{X}$, and edge set $E_t = \bigcup_{i = 1}^t E_{T_i}$. Each set $P_i$ for $i \in [t]$ defines a group of points in $\mathcal{X}$ as well as a connected component of $G_t$. We refer to this as the \emph{initial forest} for MFC. To provide intuition, the initial forest can be viewed as a proxy for the forest obtained at an intermediate step of Kruskal's or Boruvka's algorithm (see Figure~\ref{fig:partial_MST_1}). While this serves as a useful analogy, we stress that the partitioning will typically be obtained using much cheaper methods and will not satisfy any formal approximation guarantees. Section~\ref{sec:initialforest} covers practical considerations about strategies, runtimes, and quality measures for an initial forest. For now we simply assume it is given as a ``good enough'' starting point, that will ideally overlap, even if not perfectly, with some true MST (see Figure~\ref{fig:overlap}). 
\mfc{} seeks to connect the initial forest into a spanning tree for $G_\mathcal{X}$. Let $P(x) \in \mathcal{P}$ denote the component that $x \in \mathcal{X}$ belongs to. The set of inter-component edges is
\begin{equation*}
	\label{eq:intercomponent}
	\mathcal{I} = \{ (x, y) \in \mathcal{X} \times \mathcal{X} \colon P(x) \neq P(y) \}.
\end{equation*}
We wish to find a minimum weight set of edges $M \subseteq \mathcal{I}$ so that $M\cup E_t$ defines a connected graph on $\mathcal{X}$. If $M$ satisfies these constraints we say it is a valid \emph{completion set} and that $M$ \emph{completes} $\mathcal{P}$. 
The MFC problem can then be written as
\begin{equation}
	\label{eq:mfc}
	\begin{array}{ll}
		\minimize & w_\mathcal{X}(M) + w_\mathcal{X}(E_t)\\
		\text{subject to} & M \emph{ completes } \mathcal{P}.
	\end{array}
\end{equation}
Let $M^*$ denote an optimal completion set. The graph $T^* = (\mathcal{X}, E_t \cup M^*)$ is then guaranteed to be a tree (see Figure~\ref{fig:opt_mfc}); if not we could remove edges to decrease the weight while still spanning $\mathcal{X}$. If the initial forest $G_t$ is in fact contained in some optimal spanning tree of $G_\mathcal{X}$ (which would be the case if it were obtained by running a few iterations of Kruskal's or Boruvka's algorithm), then solving MFC would produce an MST of $G_\mathcal{X}$. In practice this will typically not be the case, but the problem remains well-defined regardless of any assumptions about the quality of the initial forest.

\begin{figure}[t]
	\centering
	\begin{subfigure}[b]{0.32\textwidth}
		\centering
		\includegraphics[width=\textwidth]{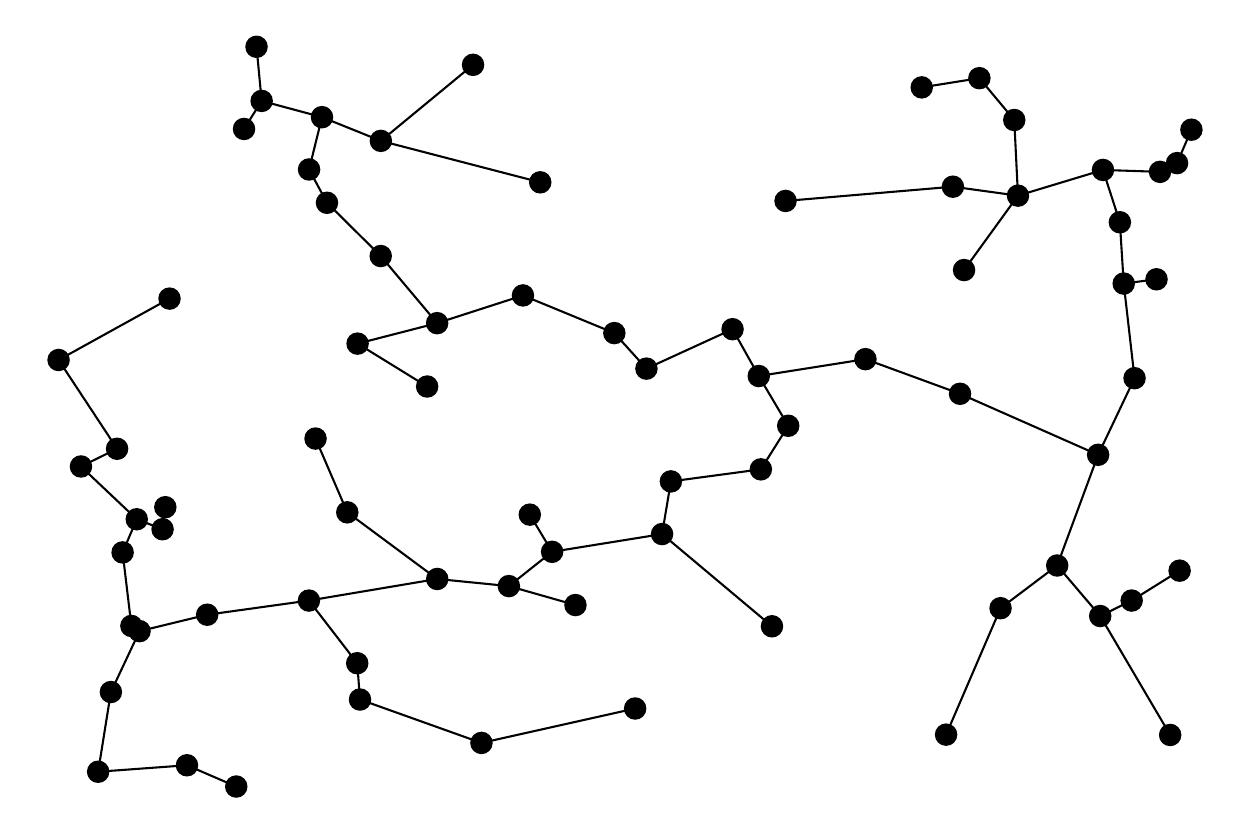}
		\caption{True metric MST}
		\label{fig:truemst}
	\end{subfigure}
	\begin{subfigure}[b]{0.32\textwidth}
		\centering
		\includegraphics[width=\textwidth]{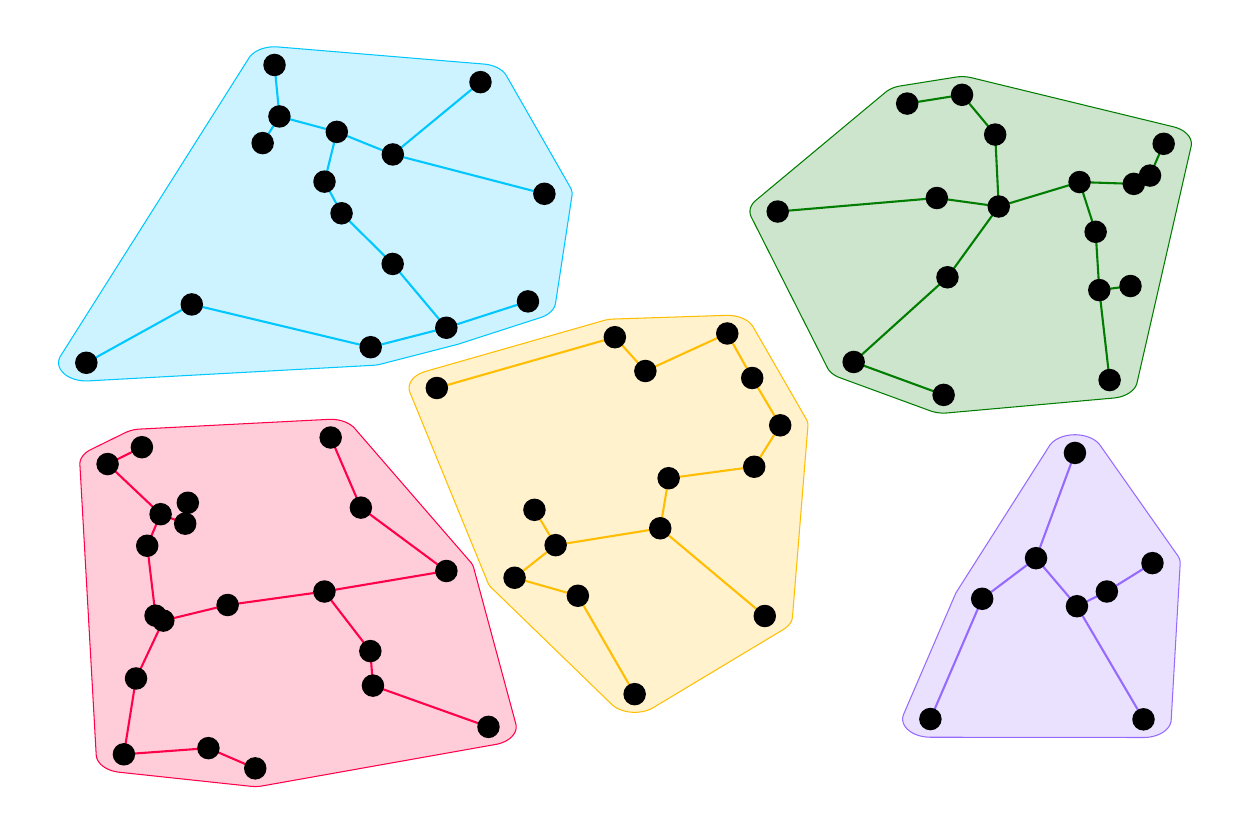}
		\caption{Partial spanning tree ($\mathcal{P}$; $\{T_i\})$}
		\label{fig:warmstart}
	\end{subfigure}
	\begin{subfigure}[b]{0.32\textwidth}
		\centering
		\includegraphics[width=\textwidth]{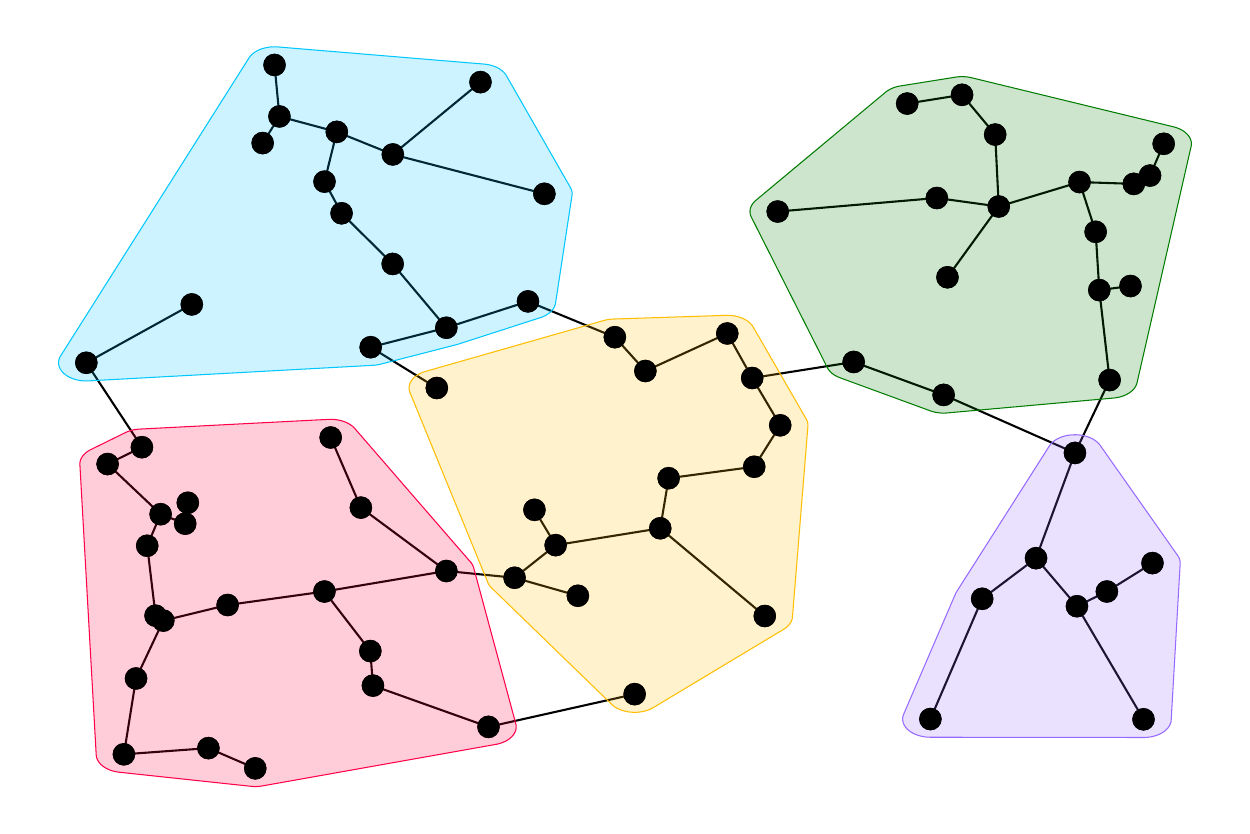}
		\caption{True MST overlap with $\mathcal{P}$}
		\label{fig:overlap}
	\end{subfigure}
	\begin{subfigure}[b]{0.32\textwidth}
		\centering
		\includegraphics[width=\textwidth]{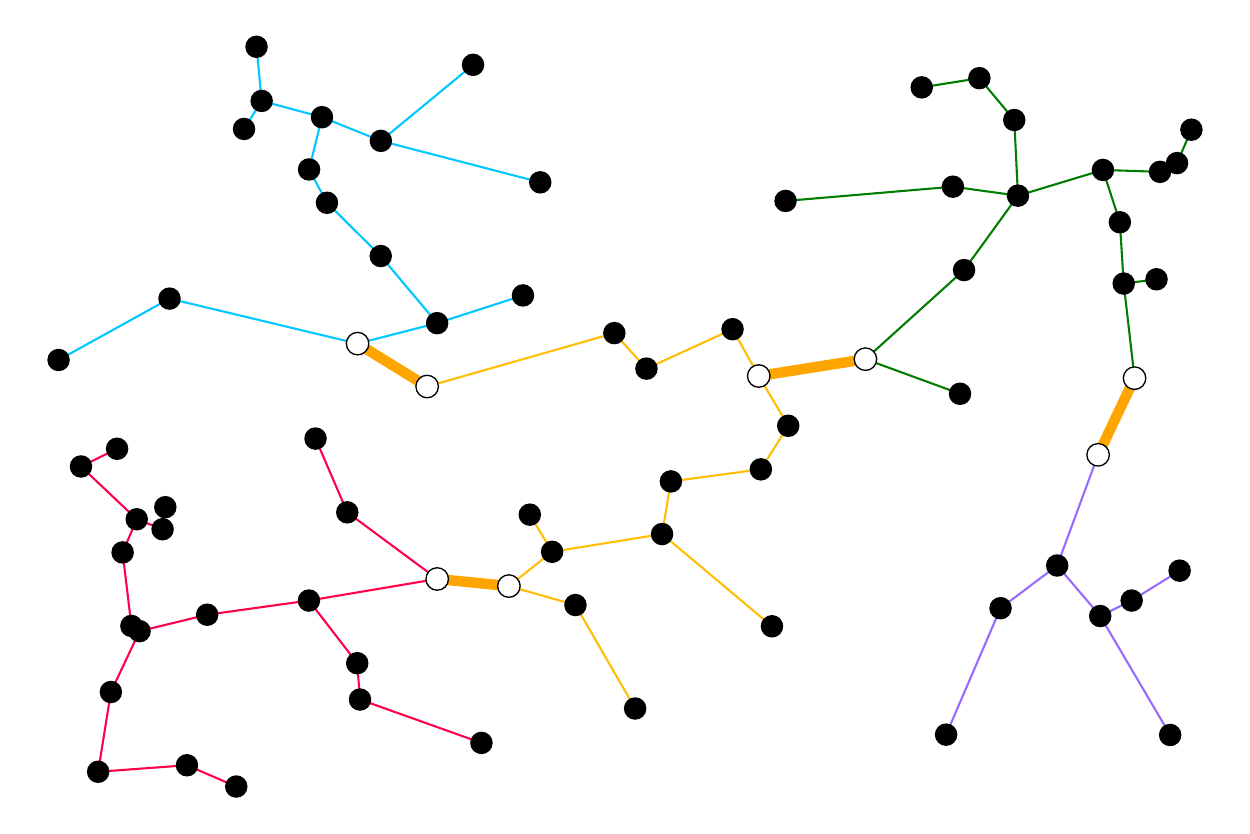}
		\caption{MFC solution ($M^*$ in orange)}
		\label{fig:opt_mfc}
	\end{subfigure}
	\begin{subfigure}[b]{0.32\textwidth}
		\centering
		\includegraphics[width=\textwidth]{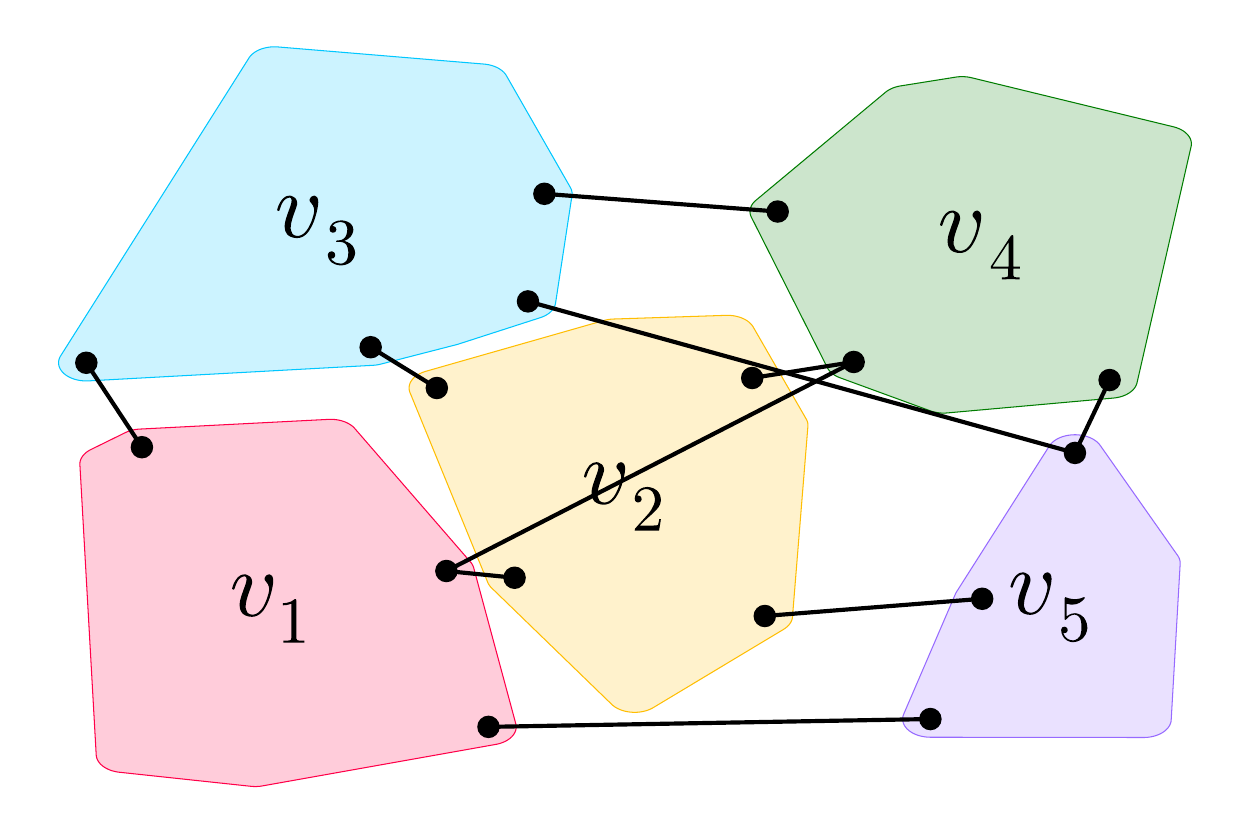}
		\caption{Coarsened graph $G_\mathcal{P}$ w.r.t.\ $w^*$}
		\label{fig:coarsenedgraph}
	\end{subfigure}
	\begin{subfigure}[b]{0.32\textwidth}
		\centering
		\includegraphics[width=\textwidth]{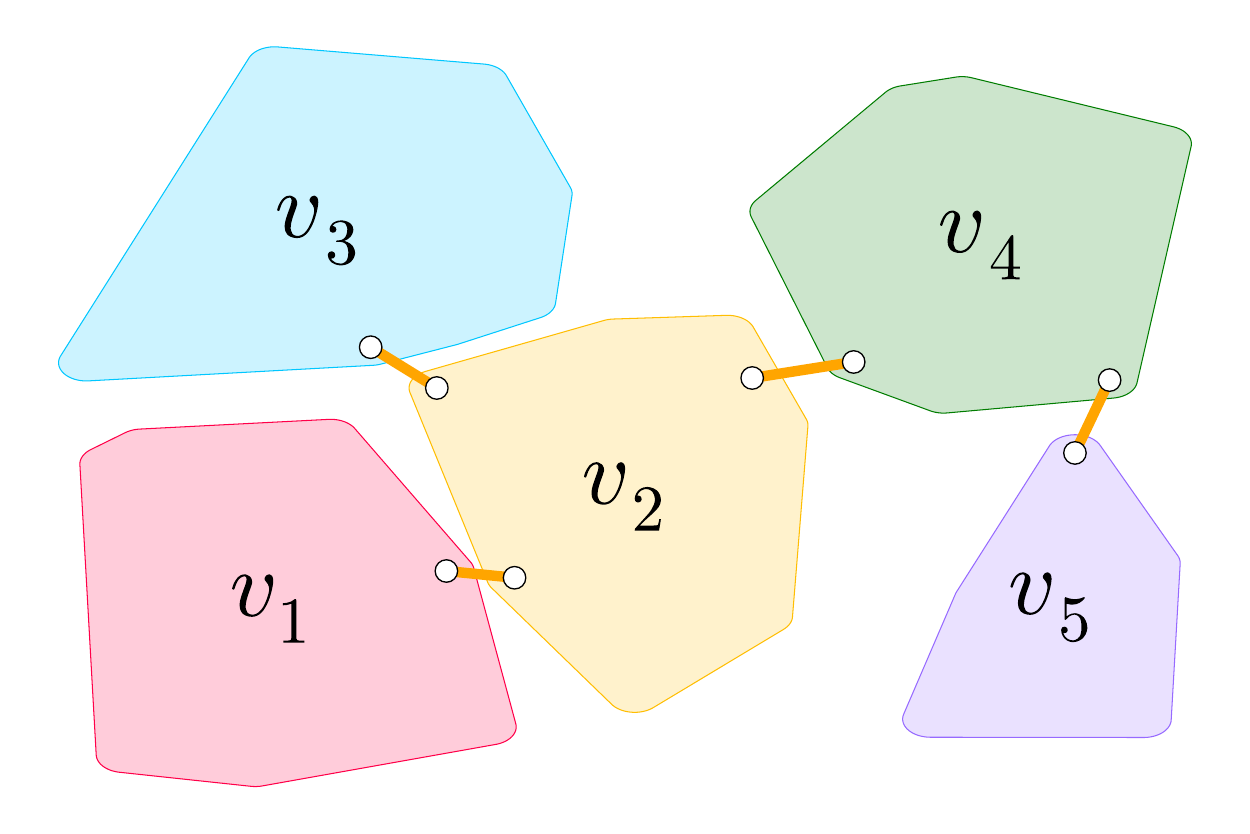}
		\caption{MST of $G_\mathcal{P}$ w.r.t.\ $w^*$}
		\label{fig:mstcoarsened}
	\end{subfigure}
	\caption{(a) We display an optimal metric MST for a toy example with $|\mathcal{X}| = 75$ points. Our framework and algorithm apply to general metric spaces, but for visualization purposes our figures focus on 2-dimensional Euclidean space. (b) The \mfc{} problem is given a partitioning $\mathcal{P}$ and spanning trees $\{T_i\}$ for components of the partition. For this illustration we used a $k$-means algorithm with $k = 5$ computed optimal spanning trees of components using the naive approach. (c)~The true MST overlaps significantly with the initial partial spanning tree, but its induced subgraph on each component is not necessarily connected. For this example, the $\gamma$-overlap (see Section~\ref{sec:learningaugmented}) is $\gamma \leq 1.12$. (d)~The optimal completion set $M^*$ is shown in orange; combining it with the spanning trees of the partial spanning tree produces a spanning tree for all of $\mathcal{X}$. (e)~The coarsened graph $G_\mathcal{P}$ has a node $v_i$ for each component $P_i \in \mathcal{P}$. Solving $O(t^2)$ bichromatic closest pair problems identifies the closest pair of points between each pair of clusters, defining an optimal weight function $w^*$ on $G_\mathcal{P}$. (f)~Finding the minimum-weight completion set $M^*$ amounts to finding the MST of $G_\mathcal{P}$ with respect to weight function $w^*$. }
	\label{fig:three_figures}
\end{figure}

The objective function in Eq.~\eqref{eq:mfc} includes the weight of the initial forest $w_\mathcal{X}(E_t)$. Although this is constant with respect to $M$ and does not affect optimal solutions, there are several reasons to incorporate this term explicitly. Most importantly, our ultimate goal is to obtain a good spanning tree for all of $G_\mathcal{X}$, and thus the weight of the full spanning tree (i.e., the objective in Eq.~\ref{eq:mfc}) is a more natural measure. Considering the weight of the full spanning tree also makes more sense in the context of our learning-augmented algorithm analysis, where the goal is to approximate the original metric MST problem on $G_\mathcal{X}$, under different assumptions about the initial forest. We note finally that excluding the term $w_\mathcal{X}(E_t)$ rules out the possibility of any meaningful approximation results. 
We prove the following result using a reduction from BCP to MFC, combined with a slight variation of a simple lower bound for monochromatic closest pair that was shown in Section 9 of Indyk~\cite{indyk1999sublinear}.
\begin{theorem}
\label{thm:hard}
Every optimal algorithm for MFC has $\Omega(n^2)$ query complexity. Furthermore, for any multiplicative factor $p \geq 1$ (not necessarily a constant), any algorithm that finds a set $M \subseteq \mathcal{I}$ that is feasible for~\eqref{eq:mfc} and satisfies $w_\mathcal{X}(M) \leq p \cdot w_{\mathcal{X}}(M^*)$ requires $\Omega(n^2)$ queries.
\end{theorem}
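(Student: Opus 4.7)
The plan is to reduce from the bichromatic closest pair problem (BCP) in an arbitrary metric space, and to adapt the adversary argument Indyk~\cite{indyk1999sublinear} used for monochromatic closest pair. Given a BCP instance with disjoint point sets $A, B$ of sizes roughly $n/2$, I build an MFC instance by taking $\mathcal{P} = \{A, B\}$ together with any two spanning trees $T_A, T_B$ whose weights are known a priori (for example, unit-weight Hamiltonian paths), so that $w_\mathcal{X}(E_t)$ is available without any distance queries. Because there are only two components, any single inter-component edge completes $\mathcal{P}$, so an optimal completion set consists of a single edge and $w_\mathcal{X}(M^*)$ equals the BCP distance. A $p$-approximation for MFC therefore yields a $p$-approximation for BCP with the same query budget, and it suffices to prove an $\Omega(n^2)$ lower bound for $p$-approximate BCP in an arbitrary metric space.

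For the BCP lower bound I would construct a family of hard metrics parameterized by a ``special'' pair $(a^*, b^*) \in A \times B$: set $d(x,y) = 1$ on every intra-component pair, $d(a^*, b^*) = \ell$, and $d(a, b) = L$ on every other inter-component pair. A short case analysis over triples shows the triangle inequality holds precisely when $1 - \ell \leq L \leq 1 + \ell$, so taking $L = 1 + \ell$ allows the ratio $L/\ell$ to be made arbitrarily large by shrinking $\ell$. For a given $p \geq 1$, I then pick $\ell > 0$ small enough that $L > p\ell$; the companion bounds $L + \ell > p\ell$ and $2L > p\ell$ follow automatically.

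Now consider the uniform distribution over the $(n/2)^2$ placements of $(a^*, b^*)$. The adversary answers every query with its default value ($1$ intra-component, $L$ inter-component) and at the end embeds the special pair on any inter-component edge that is neither queried nor included in the algorithm's output $M$; this is consistent with all prior responses so long as fewer than $(n/2)^2$ inter-component edges are touched. Any feasible $M$ that contains an edge of weight $L$ has weight at least $L > p\ell$, and any $M$ of size at least two has weight at least $\min(2L, L+\ell) > p\ell$, so the algorithm can achieve a $p$-approximation only by querying the special edge and outputting it alone. A deterministic algorithm using fewer than $(n/2)^2/2$ queries misses the special edge with probability at least $1/2$ over the uniform distribution, so by Yao's principle every (possibly randomized) $p$-approximation algorithm needs $\Omega(n^2)$ queries; the exact statement is the $p = 1$ case. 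The main obstacle is the case analysis pinning down the metric constraint $L \in [1 - \ell, 1 + \ell]$; once that is in place the adversary argument is essentially Indyk's monochromatic argument adapted to the bipartite setting.
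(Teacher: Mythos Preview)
Your reduction is essentially the paper's: take a two-component MFC instance so that completion is exactly BCP, hide one light inter-component edge among $\Theta(n^2)$ heavy ones, and argue by adversary that any $p$-approximation must locate it. The paper's construction is simpler than yours, though: it sets $d(a,b)=1$ on one random pair $(a,b)\in P_1\times P_2$ and $d(x,y)=2p$ on \emph{every} other pair, intra- and inter-component alike. With only two distance values and a single edge of weight $1$, no triangle can have two light sides, so the triangle inequality is immediate and the case analysis you flagged as the ``main obstacle'' never arises. Your three-value metric works too, and your Yao step (which the paper omits) usefully extends the bound to randomized algorithms; just tighten one point. The sentence ``the algorithm can achieve a $p$-approximation only by querying the special edge and outputting it alone'' is not quite right: in the adversary version the algorithm could output the one unqueried edge and still succeed, and in the distributional version ``missing the special edge'' does not by itself force failure. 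What you need is that under all-default responses the output $M$ is a \emph{fixed} edge set, so it coincides with the uniformly random special pair with probability $O(1/n^2)$; together with the $\geq 1/2$ probability of missing, this gives the constant failure probability Yao requires.
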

\begin{proof}
	Let $\mathcal{X}$ be a set of $n$ points that are partitioned into two sets $P_1$ and $P_2$ of size $n/2$. Define a distance function $d$ such that $d(a,b) = 1$ for a randomly chosen pair $(a,b) \in P_1 \times P_2$, and such that $d(x,y) = 2p$ for all other pairs $(x,y) \in {\mathcal{X} \choose 2} \backslash \{(a,b)\}$. Note that this $d$ is a metric. The MFC problem on this instance is identical to solving BCP on $P_1$ and $P_2$. The unique optimal solution is exactly $M^* = (a,b)$, and no other choice of $M \subseteq \mathcal{I}$ comes within a factor $p$ of this solution. Thus, any $p$-approximation algorithm must find the pair $(a,b)$ with distance 1 among a collection of $\Omega(n^2)$ pairs, where all other pairs have distance $2p$. This requires $\Omega(n^2)$ queries.
\end{proof}
Theorem~\ref{thm:hard} shows that it is impossible in general to find optimal solutions for MFC, or multiplicative approximations for $w_\mathcal{X}(M^*)$, in $o(n^2)$ time. However, this does not rule out the possibility of approximating the more relevant objective $w_\mathcal{X}(M) + w_\mathcal{X}(E_t)$.

\paragraph{The MFC coarsened graph.} The  MFC problem is equivalent to finding a minimum spanning tree in a \textit{coarsened graph} $G_\mathcal{P} = (V_\mathcal{P}, E_\mathcal{P})$ with node set $V_\mathcal{P} = \{v_1, v_2, \hdots, v_t\}$ where $v_i$ represents component $P_i$. We refer to $v_i$ as the $i$th \textit{component node}. This graph is complete: $E_\mathcal{P}$ includes all pairs of component nodes. Figure~\ref{fig:coarsenedgraph} provides an illustration of the coarsened graph. Finding an MFC solution $M^* \subseteq \mathcal{I}$ is equivalent to finding an MST in $G_\mathcal{P}$ (see Figure~\ref{fig:mstcoarsened}) with respect to the weight function $w^* \colon E_\mathcal{P} \rightarrow \mathbb{R}^+$ defined for every $i, j \in \{1,2, \hdots, t\}$ by
\begin{equation}
	\label{eq:wstar}
	w^*_{ij} = w^*(v_i, v_j) = d(P_i, P_j) = \min_{x \in P_i, y\in P_j} d(x,y).
\end{equation}
Computing $w_{ij}^*$ exactly requires solving a bichromatic closest pair problem over sets $P_i$ and $P_j$. A straightforward approach for computing $w_{ij}^*$ is to check all $|P_i|\cdot |P_j|$ pairs of points in $P_i \times P_j$. The number of distance queries needed to apply this simple strategy to form all of $w^*$ is
\begin{equation*} 
	\frac{1}{2}\sum_{i = 1}^t |P_i| \cdot (n - |P_i|) = \frac{n^2}{2} - \frac{1}{2} \sum_{i = 1}^t |P_i|^2.
\end{equation*}
In a worst-case scenario where component sizes are balanced, we would need
$\Omega\left({t \choose 2} \frac{n}{t} \frac{n}{t}\right) = \Omega(n^2)$ 
queries, which is not surprising given Theorem~\ref{thm:hard}. Nevertheless, this notion of a coarsened graph will be very useful in developing approximation algorithms for MFC.

\subsection{Computing an initial forest}
\label{sec:initialforest}
Our theoretical analysis requires very few assumptions about the partitioning $\mathcal{P}$ and partition spanning trees $\{T_i \colon i = 1,2, \hdots t\}$ that are given as input to the \mfc{} problem. In particular, we need not assume that $T_i$ is a minimum spanning tree or even a good spanning tree of $P_i$. Similarly, the components are not required to be sets of points that are close together in the metric space in order for the problem to be well-defined. Nevertheless, \textit{in practice} the hope is that $\mathcal{P}$ identifies groups of nearby points in $\mathcal{X}$ and that $T_i$ is a reasonably good spanning tree for $P_i$ for each $i \in [t] = \{1,2, \hdots, t\}$. 

In order for our approach to be meaningful for large-scale metric MST problems, we must be able to obtain a reasonably good initial forest without this dominating our overall algorithmic pipeline.
Here we discuss two specific strategies for initial forest computations, both of which are fast, easy to parallelize, and motivated by techniques that are already being used in practice in large-scale clustering pipelines. In particular, there are already a number of existing heuristics for finding MSTs and hierarchical clusters that rely in some way on partitioning an initial dataset and then connecting or merging components~\cite{zhong2015fast,jothi2018fast,mishra2020efficient,chen2013clustering}. See Section~\ref{sec:related} for more details. These typically focus only on point cloud data, do not apply to arbitrary metric spaces, and do not come with any type of approximation guarantee. Nevertheless, they provide examples of fast heuristics for large-scale metric spanning tree problems, and serve as motivation for our more general strategies.

\paragraph{Strategy 1: Components of a $k$-NN graph.}
A natural way to obtain an initial forest for $G_\mathcal{X}$ is to compute an approximate $k$-nearest neighbor graph for a reasonably small $k$, which can be accomplished with the $k$-NN descent algorithm~\cite{dong2011efficient} or a recent distributed generalization of this method~\cite{iwabuchi2023towards}. This efficiently connects a large number of points using small-weight edges. The $k$-NN graph will often be disconnected, and we can use the set of connected components as our initial components $\mathcal{P} = \{P_1, P_2, \hdots , P_t\}$. The exact number of components will depend on the distribution of the data and the number of nearest neighbors computed. For larger values of $k$, the $k$-NN graph is more expensive to compute, but then there are fewer components to connect, so there are trade-offs to consider. The components of the $k$-NN graph will typically not be trees but will be sparse (each node has at most $O(k)$ edges), so we can use classical MST algorithms to find spanning trees for all components in $\sum_{i = 1}^t \tilde{O}(k \cdot |P_i|) = \tilde{O}(kn)$ time. The exact runtime of the $k$-NN descent algorithm depends on various parameters settings, but prior work reports an empirical runtime of $O(n^{1.14})$~\cite{dong2011efficient}, with strong empirical performance across a range of different metrics and dataset sizes. We remark that $k$-NN computations have already been used elsewhere as subroutines for large-scale Euclidean MST computations~\cite{almansoori2024fast,chen2013clustering}.

\paragraph{Strategy 2: Fast clustering heuristics.}
Another approach is to form components of the initial forest $\mathcal{P} = \{P_1, P_2, \hdots, P_t\}$ by applying a fast clustering heuristic to $\mathcal{X}$ such as a distributed $k$-center algorithm~\cite{malkomes2015fast,mcclintock2016efficient}. Even the simple sequential greedy 2-approximation algorithm for $k$-center can produce an approximate clustering using $O(nk)$ queries~\cite{gonzalez1985clustering}. Approximate or exact minimum spanning trees for each $P_i$ can be found in parallel. The remaining step is to find a good way to connect the forest. Similar approaches that partition the initial dataset using $k$-means clustering also exist~\cite{zhong2015fast,jothi2018fast}, though this inherently forms clusters based on Euclidean distances. For all of these clustering-based approaches, the number of components $t$ for the initial forest is easy to control since it exactly corresponds to the number of clusters $k$. Smaller $k$ leads to larger clusters, and hence finding an MST of each $P_i$ is more expensive. However, there are then fewer components to connect, so there is again a trade-off to consider. 
We remark that it may seem counterintuitive to use a clustering method as a subroutine for finding an MST, since one of the main reasons to compute an MST is to perform clustering. We stress that Strategy 2 uses a cheap and fast clustering method that identifies sets of points that are somewhat close in the metric space, without focusing on whether they are good clusters for a downstream application. This speeds up the search for a good spanning tree, which can be used as one step of a more sophisticated hierarchical clustering pipeline. \\

\subsection{MFC as a learning-augmented framework}
\label{sec:learningaugmented}
Our approach fits the framework of learning-augmented algorithms in that the initial forest can be viewed as a prediction for a partial metric MST, such as the forest obtained by running several iterations of a classical MST algorithm. 
In an ideal setting, the initial forest would be a subset of an optimal MST. If so, then an optimal solution to MFC would produce an optimal metric MST. We relax this by introducing a more general way to measure how much an optimal MST ``overlaps'' with initial forest components. 
Let $\mathcal{T}_\mathcal{X}$ represent the set of minimum spanning trees of $G_\mathcal{X}$, and $T \in \mathcal{T}_\mathcal{X}$ denote an arbitrary MST.
For components $\mathcal{P} =  \{P_1, P_2, \hdots, P_t\}$, let $T(P_i)$ denote the induced subgraph of $T$ on $P_i$, and let $T(\mathcal{P}) = \bigcup_{i = 1}^t T(P_i)$ denote the edges of $T$ contained inside components of $\mathcal{P}$. 
Larger values of $w_\mathcal{X}(T(\mathcal{P}))$ indicate better initial forests, since this means an optimal MST places a larger weight of edges inside these components. 
We define the $\gamma$-overlap for the initial forest to be the ratio
\begin{equation}
\label{eq:gamma}
\gamma(\mathcal{P})  = \frac{w_\mathcal{X}(E_t)}{\max_{T \in \mathcal{T}_\mathcal{X}} w_\mathcal{X}(T(\mathcal{P}))}.
\end{equation}
This measures the weight of edges that the initial forest places inside $\mathcal{P}$, relative to the weight of edges an optimal MST places inside $\mathcal{P}$. 
When $\mathcal{P}$ is clear from context we will simply write $\gamma$. 
Lower ratios for $\gamma$ are better. 
In the denominator, we maximize $w_\mathcal{X}(T(\mathcal{P}))$ over all optimal spanning trees since any MST of $G_\mathcal{X}$ is equally good for our purposes; hence we are free to focus on the MST that overlaps most with $\mathcal{P}$. The optimality of $T$ implies that $w_\mathcal{X}(T(P_i)) \leq w_\mathcal{X}(T_i)$ for every $i \in \{1,2, \hdots, t\}$, which in turn implies that $\gamma(\mathcal{P}) \geq 1$ always. Even if $T_i$ is a minimum spanning tree for $P_i$, it is possible to have $w_\mathcal{X}(T(P_i)) < w_\mathcal{X}(T_i)$ since the induced subgraph $T(P_i)$ is not necessarily connected (see Figure~\ref{fig:overlap} for an example). We achieve the lower bound $\gamma = 1$ exactly in the idealized setting where the initial forest is contained in some optimal metric MST. In practice, we expect the clusters $P_i$ and spanning trees $T_i$ to be imperfect in the sense that connecting them will likely not provide an optimal MST for $G_X$. However, for an initial forest where each $P_i$ is a set of nearby points and $T_i$ is a reasonably good spanning tree for $P_i$, we would expect $\gamma$ to be larger than 1 but still not too large. Figure~\ref{fig:overlap} provides an example where we can certify that $\gamma \leq 1.12$ by comparing against one optimal MST. We later show experimentally that we can quickly obtain initial forests with small $\gamma$-overlap for a wide range of datasets and metrics. 

\paragraph{Learning-augmented algorithm guarantees.}
 We typically would not compute the ratio $\gamma$ for large-scale applications as this would be even more computationally expensive than solving the original metric MST problem. However, this serves as a theoretical measure of initial forest quality when proving approximation guarantees, following the standard approach in the analysis of learning-augmented algorithms. 
In the worst case, no algorithm making $o(n^2)$ queries can obtain a constant factor approximation for metric MST in arbitrary metric spaces~\cite{indyk1999sublinear}.  Following the standard goal of learning-augmented algorithms, we will improve on this worst-case setting when the initial forest is good; otherwise we will recover the worst-case behavior. Formally, we will consider the initial forest to be good when $\gamma$ is finite and the number of components is $t = o(n)$. For this setting, Section~\ref{sec:algs} will present a learning-augmented $(2\gamma+1)$-approximation algorithm for metric MST with subquadratic complexity. There are two ways to see that this is no worse than the worst-case guarantee using no initial forest, depending on whether we focus on worst-case runtimes or wort-case approximation factors. From the perspective of runtimes, if $t = \Omega(n)$ we can default to the worst-case quadratic complexity algorithms that provide an optimal MST. From the perspective of approximation factors, we certainly do no worse than the worst-case approximation factor, which is infinite for subquadratic algorithms. We will state our approximation results and runtimes more precisely in Section~\ref{sec:algs}.

\section{Approximate Completion Algorithm}
\label{sec:algs}
\begin{algorithm}[tb]
	\caption{\textsf{MFC-Approx}}
	\label{alg:mfcapprox}
	\begin{algorithmic}[5]
		\State{\bfseries Input:} $\mathcal{X} = \{x_1, x_2, \hdots , x_n\}$, components $\mathcal{P} = \{P_1, P_2, \hdots, P_t\}$, spanning trees $\{T_1, T_2, \hdots, T_t\}$
		\State {\bfseries Output:} Spanning tree for implicit metric graph of $\mathcal{X}$
		\For{$i = 1, 2, \hdots t$}
		\State Select arbitrary component representative $s_i \in P_i$
		\EndFor
		\For{$i = 1, 2, \hdots t-1$}
		\For{$j = i+1, \hdots , t$}
		\State $w_{i \rightarrow j} = \min_{x_i \in P_i} d(x_i, s_j)$  \quad \hfill \texttt{ // closest a $P_i$ node comes to $s_j$}
		\State $w_{j \rightarrow i} = \min_{x_j \in P_j} d(x_j, s_i)$ \quad \hfill  \texttt{ // closest a $P_j$ node comes to $s_i$}
		\State $\hat{w}_{ij} = \min \{w_{i \rightarrow j}, w_{j \rightarrow i}\}$ \quad  \hfill \texttt{// set weight for edge $(v_i, v_j)$} 
		\EndFor
		\EndFor
		\State $\hat{T}_{\mathcal{P}} = \textsc{OptimalMST}(\{\hat{w}_{ij}\}_{i,j \in [t]})$ \hfill \texttt{ // find optimal MST on complete $t$-node graph}
		\State Return spanning tree $\hat{T}$ of $G_\mathcal{X}$ by combining $\bigcup_{i=1}^t T_i$ with edges from $\hat{T}_{\mathcal{P}}$.
	\end{algorithmic}
\end{algorithm}

We now present an algorithm that approximates MFC to within a factor $c < 2.62$. We also prove it can be viewed as a learning-augmented algorithm for metric MST, where the approximation factor depends on the $\gamma$-overlap of the initial forest. 

\begin{figure*}[t]
	\centering
	\begin{subfigure}[b]{0.33\textwidth}
		\centering
		\includegraphics[width=\textwidth]{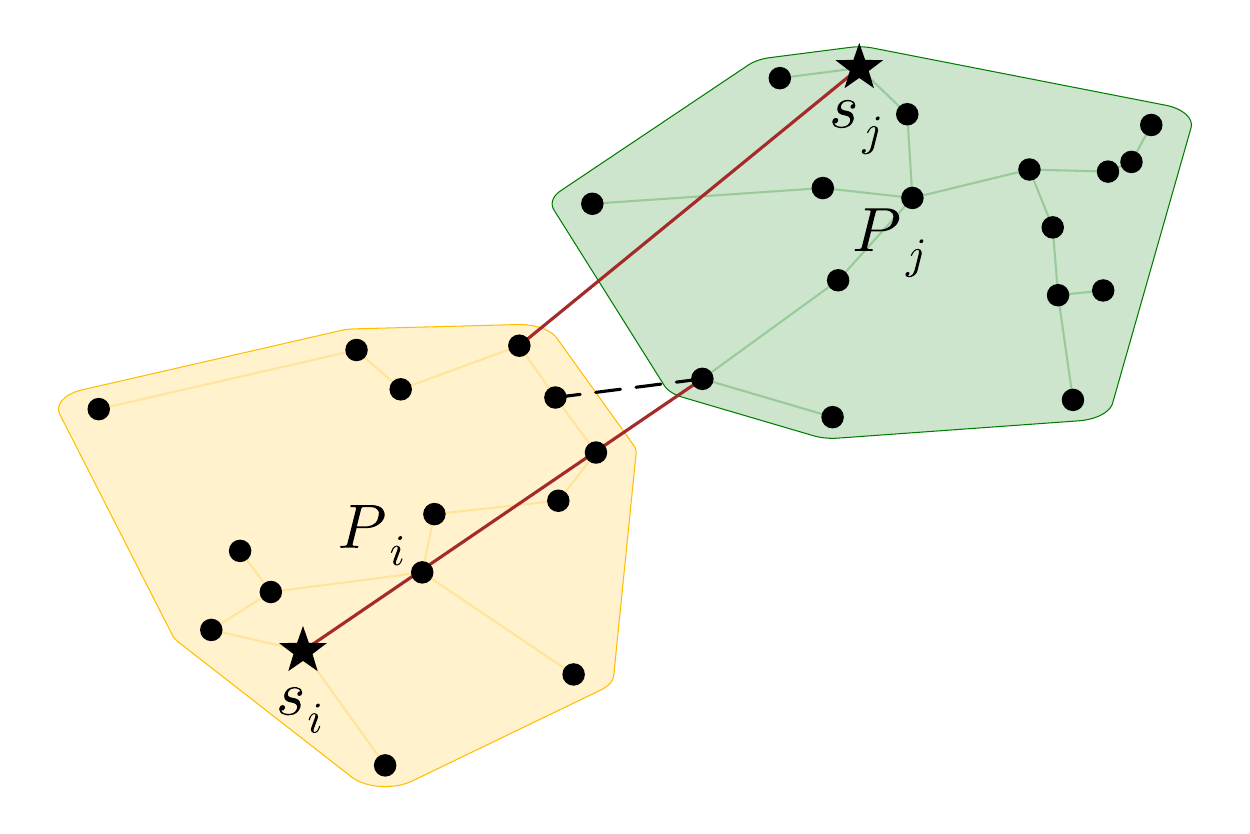}
		\caption{Computing $\hat{w}_{ij}$}
		\label{fig:wijhat}
	\end{subfigure}\hfill
	\begin{subfigure}[b]{0.33\textwidth}
		\centering
		\includegraphics[width=\textwidth]{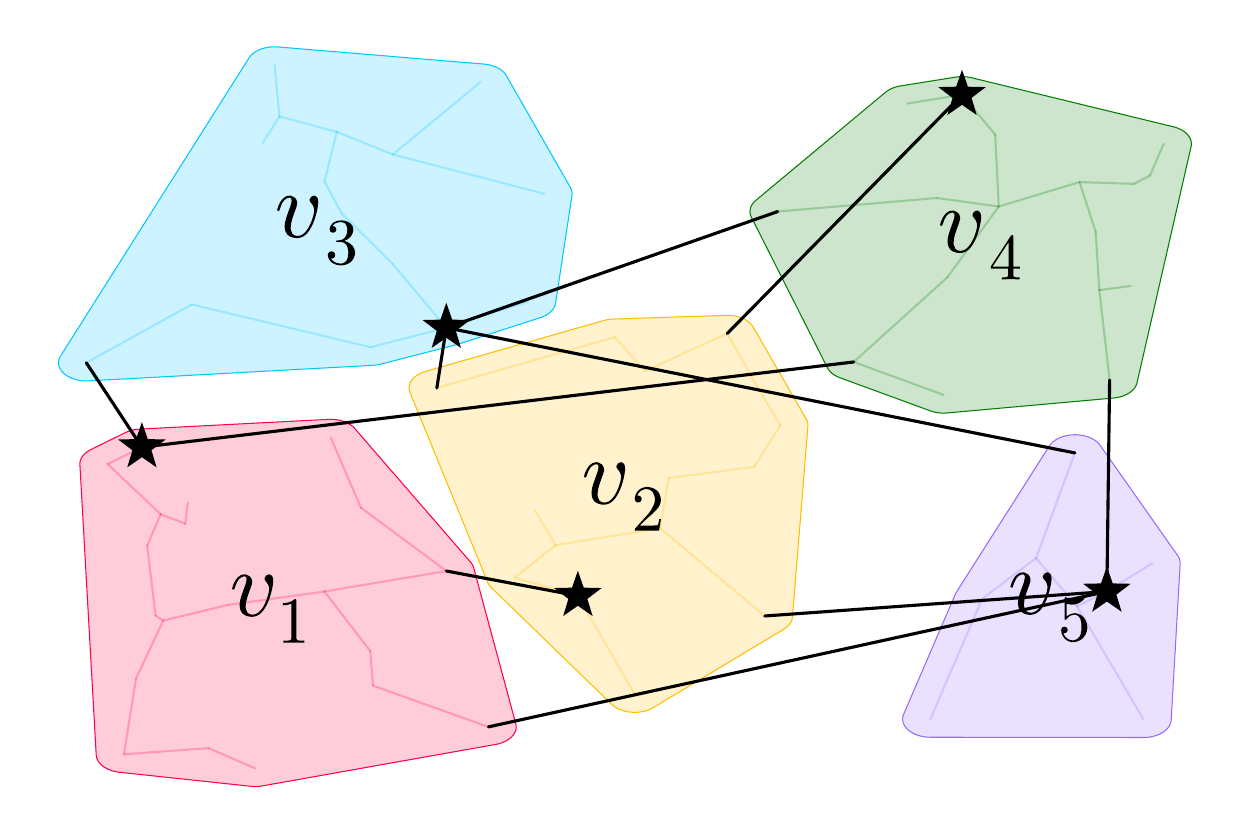}
		\caption{$G_\mathcal{P}$ w.r.t.\ $\hat{w}$}
		\label{fig:componentsgraphhat}
	\end{subfigure}\hfill
	\begin{subfigure}[b]{0.33\textwidth}
		\centering
		\includegraphics[width=\textwidth]{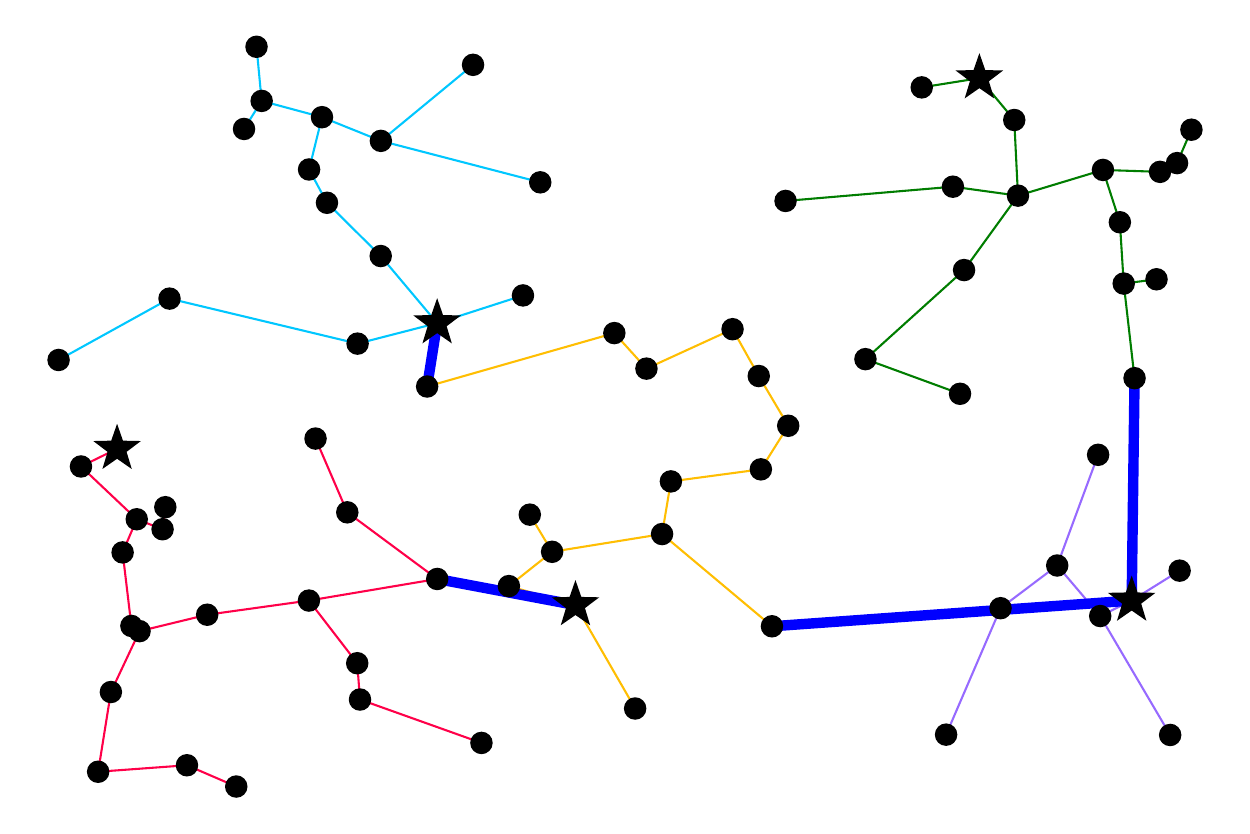}
		\caption{\textsf{MFC-Approx} output}
		\label{fig:approx_mfc}
	\end{subfigure}
	\caption{(a) Finding the minimum distance between components $P_i$ and $P_j$ (dashed line) is an expensive bichromatic closest pair problem. \textsf{MFC-Approx} instead performs a cheaper nearest neighbor query for a \emph{representative} point in each component ($s_i$ and $s_j$, shown as stars). The algorithm finds the closest point to each representative from the opposite cluster, then takes the minimum of the two distances. 
		(b) Applying this to each pair of components produces a weight function $\hat{w}$ for the coarsened graph $G_\mathcal{P}$. Finding an MST of $G_\mathcal{P}$ with respect to $\hat{w}$ yields (c) a 2.62-approximation for MFC.}
	\label{fig:mfcapprox}
\end{figure*}

Pseodocode for our method, which we call \textsf{MFC-Approx}, is shown in Algorithm~\ref{alg:mfcapprox}. This algorithm starts by choosing an arbitrary point $s_i \in P_i$ for each $i \in \{1,2, \hdots, t\}$ to be the component's \emph{representative} (starred nodes in Figure~\ref{fig:mfcapprox}). 
The algorithm computes the distance between every point $x \in \mathcal{X}$ and all of the other representatives. For every pair of distinct components $i$ and $j$ we compute the weights:
\begin{align}
	w_{i \rightarrow j} &= \min_{x_i \in P_i} d(x_i, s_j)  	&\text{(the closest $P_i$ node to $s_j$)} \\
	w_{j \rightarrow i} &= \min_{x_j \in P_j} d(x_j, s_i)  &\text{(the closest $P_j$ node to $s_i$).} 
\end{align}
We then define the approximate edge weight between component nodes $v_i$ and $v_j$ to be:
\begin{align}
	\label{eq:approxweight}
	\hat{w}_{ij} = \min \{w_{i \rightarrow j}, w_{j \rightarrow i}\}.
\end{align}
This upper bounds the minimum distance $w_{ij}^*$ between the two components (Figure~\ref{fig:wijhat}).
Computing this for all pairs of components creates a new weight function $\hat{w}$ for the
coarsened graph $G_\mathcal{P}$ (Figure~\ref{fig:componentsgraphhat}). The algorithm keeps track of the points in $\mathcal{X}$ that define these edge weights in $G_\mathcal{P}$. It then computes an MST in $G_\mathcal{P}$ with respect to $\hat{w}$, then identifies the corresponding edges in $G_\mathcal{X}$, to produce a feasible solution for MFC (Figure~\ref{fig:approx_mfc}).

\subsection{Bounded and unbounded edges in the coarsensed graph}
A key step of our approximation analysis is to separate edges of the coarsened graph $G_\mathcal{P} = (V_\mathcal{P},E_\mathcal{P})$ into two categories, depending on the relationship between $\hat{w}$ and $w^*$.
Formally, for an arbitrary constant $\beta \geq 1$, we say edge $(v_i,v_j) \in E_\mathcal{P}$ is \textit{$\beta$-bounded} if $\hat{w}_{ij}  \leq \beta w^*_{ij}$, otherwise it is \textit{$\beta$-unbounded}. 
If all edges in $G_\mathcal{P}$ were $\beta$-bounded, finding an MST in $G_\mathcal{P}$ with respect to $\hat{w}$ would produce a $\beta$-approximation for the MST problem in $G_\mathcal{P}$ with respect to $w^*$. In general we cannot guarantee all edges will be $\beta$-bounded, as this would imply Algorithm~\ref{alg:mfcapprox} is a subquadratic $\beta$-approximation algorithm for MFC, contradicting Theorem~\ref{thm:hard}. Nevertheless, any $\beta$-bounded edge that Algorithm~\ref{alg:mfcapprox} includes in its MST of $G_\mathcal{P}$ is easy to bound in terms of the optimal edge weights $w^*$. 
If we include a $\beta$-unbounded edge in our MST of $G_\mathcal{P}$, we can no longer bound its weight in terms of an optimal solution to MFC. However, the following lemma shows that its weight can be bounded in terms of the weight of the initial forest.
\begin{lemma}
	\label{lem:unboundededges}
	Let $P_i$ and $P_j$ be an arbitrary pair of components and let $\beta > 1$. If $\hat{w}_{ij} > \beta w^*_{ij}$, then 
	\begin{align}
		\hat{w}_{ij} <
		\frac{\beta}{\beta - 1}  \min\{w_\mathcal{X}(T_i), w_\mathcal{X}(T_j)\}.
	\end{align}
\end{lemma}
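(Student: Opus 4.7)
The plan is to show $\hat{w}_{ij} \leq w^*_{ij} + \min\{w_\mathcal{X}(T_i), w_\mathcal{X}(T_j)\}$ via the triangle inequality, and then combine this additive bound with the hypothesis $\hat{w}_{ij} > \beta w^*_{ij}$ to extract the multiplicative bound in the lemma statement.

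For the first step, I would fix a pair $(x^*, y^*) \in P_i \times P_j$ achieving $d(x^*, y^*) = w^*_{ij}$. By the definition of $w_{i \to j}$ in Algorithm~\ref{alg:mfcapprox}, we have $w_{i \to j} \leq d(x^*, s_j)$, and the triangle inequality gives $d(x^*, s_j) \leq d(x^*, y^*) + d(y^*, s_j) = w^*_{ij} + d(y^*, s_j)$. The key observation is that both $y^*$ and $s_j$ lie in $P_j$, which is spanned by the tree $T_j$; the unique $y^*$-to-$s_j$ path in $T_j$ has total edge weight at most $w_\mathcal{X}(T_j)$, and applying the triangle inequality along that path yields $d(y^*, s_j) \leq w_\mathcal{X}(T_j)$. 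Therefore $w_{i \to j} \leq w^*_{ij} + w_\mathcal{X}(T_j)$, and a symmetric argument swapping the roles of $i$ and $j$ gives $w_{j \to i} \leq w^*_{ij} + w_\mathcal{X}(T_i)$. Taking the minimum of these two bounds yields
\begin{equation*}
\hat{w}_{ij} = \min\{w_{i \to j}, w_{j \to i}\} \leq w^*_{ij} + \min\{w_\mathcal{X}(T_i), w_\mathcal{X}(T_j)\}.
\end{equation*}

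For the second step, the hypothesis $\hat{w}_{ij} > \beta w^*_{ij}$ rearranges to $w^*_{ij} < \hat{w}_{ij}/\beta$. Substituting this into the additive bound and collecting terms on one side gives
\begin{equation*}
\hat{w}_{ij}\left(1 - \tfrac{1}{\beta}\right) < \min\{w_\mathcal{X}(T_i), w_\mathcal{X}(T_j)\},
\end{equation*}
which is equivalent to the claimed inequality after multiplying both sides by $\beta/(\beta-1)$, using $\beta > 1$.

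I do not anticipate a major obstacle here; the proof is essentially a triangle-inequality argument. The only step that is slightly subtle is justifying $d(y^*, s_j) \leq w_\mathcal{X}(T_j)$: one must note that repeated application of the triangle inequality along the unique path in $T_j$ bounds the metric distance by the sum of edge weights on that path, which in turn is at most the total weight of $T_j$. This uses only that $T_j$ is a tree spanning $P_j$ with weights inherited from the metric, and does not require $T_j$ to be a minimum spanning tree or otherwise high-quality.
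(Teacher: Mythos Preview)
Your proposal is correct and follows essentially the same approach as the paper. The paper introduces an auxiliary quantity $\alpha_i = \max_{x \in P_i} d(x, s_i)$ and proves $w^*_{ij} \geq \hat{w}_{ij} - \min\{\alpha_i,\alpha_j\}$ before bounding $\alpha_i \leq w_\mathcal{X}(T_i)$ via the tree path, whereas you go directly to the bound $d(y^*, s_j) \leq w_\mathcal{X}(T_j)$; but the core triangle-inequality argument and the algebraic combination with the hypothesis are the same.
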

\begin{proof}
	For each $i \in \{1, 2, \hdots, t\}$, we denote the maximum distance between a point in $P_i$ and its component representative $s_i$ by $\alpha_i = \max_{x \in P_i} d(x, s_i)$.
	Let $x_i^* \in P_i$ and $x_j^* \in P_j$ be points satisfying $d(x_{i}^*, x_{j}^*) = d(P_i, P_j) = w_{ij}^*$.
	We use the (reverse) triangle inequality and the definition of $\alpha_i$ to see that:
	\begin{align*}
		d(x_i^*, x_j^*) = \min_{x_j \in P_j} d(x_i^*, x_j) \geq  \min_{x_j \in P_j} d(s_i, x_j) - d(s_i, x_i^*) \geq  \min_{x_j \in P_j} d(s_i, x_j) - \alpha_i = w_{j \rightarrow i} - \alpha_i \geq \hat{w}_{ij} - \alpha_i.
	\end{align*}
	Similarly we can show that 
	\begin{align*}
		d(x_i^*, x_j^*) = \min_{x_i \in P_i} d(x_i, x_j^*) \geq  \min_{x_i \in P_i} d(s_j, x_i) - d(s_j, x_j^*)  \geq  \min_{x_i \in P_i} d(s_j, x_i) - \alpha_j = w_{i \rightarrow j} - \alpha_j \geq \hat{w}_{ij} - \alpha_j.
	\end{align*}
	In other words, we have the bound $w^*_{ij} \geq \hat{w}_{ij} - \min\{\alpha_i, \alpha_j\}$. Combining this with the assumption that $\hat{w}_{ij} > \beta w_{ij}^*$ gives
	\begin{align*}
		\hat{w}_{ij} &> \beta w_{ij}^* \geq \beta\hat{w}_{ij} - \beta\min \{\alpha_i, \alpha_j\} \implies \frac{\beta}{\beta -1 } \min \{\alpha_i, \alpha_j\}> \hat{w}_{ij}.
	\end{align*}
	The proof follows from the observation that $\alpha_i \leq w_\mathcal{X}(T_i)$. To see why, note that there exists some ${x} \in P_i$ such that $d({x}, s_i) = \alpha_i$. Since $T_i$ is a spanning tree of $P_i$, it must contain a path from $s_i$ to ${x}$ with sum of edge weights at least $\alpha_i$.
\end{proof}
Our main approximation guarantees rely on Lemma~\ref{lem:unboundededges}, as well as two other simple supporting observations. This first amounts to the observation that a tree has arboricity and degeneracy 1.
\begin{observation}
	\label{lem:treeorient}
	If $T = (V,E_T)$ is an undirected tree, there is a way to orient edges in such a way that every node has at most one outgoing edge.
\end{observation}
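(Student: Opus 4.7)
The plan is to prove this constructively by rooting the tree. First, I would pick an arbitrary vertex $r \in V$ and declare it the root of $T$. Since $T$ is connected and acyclic, for every other vertex $v \in V \setminus \{r\}$ there is a unique simple path from $v$ to $r$, and hence a well-defined parent $\text{par}(v) \in V$ (the neighbor of $v$ on that path).

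Next, I would define the orientation by directing each edge $\{u,v\} \in E_T$ from the child to the parent: that is, if $u = \text{par}(v)$, orient the edge as $v \to u$. Each edge in $E_T$ corresponds to a unique (child, parent) pair in the rooted tree, so this assigns a direction to every edge. Under this orientation, each non-root vertex $v$ has exactly one outgoing edge, namely the edge to $\text{par}(v)$, and the root $r$ has zero outgoing edges. Thus every vertex has at most one outgoing edge, as required.

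There is essentially no obstacle here: the statement is a routine consequence of the fact that every tree admits a rooted orientation. One could alternatively give a counting argument (a tree on $n$ vertices has $n-1$ edges, and any orientation with at most one outgoing edge per vertex accommodates at most $n$ oriented edges), but the constructive rooting argument is cleaner and exhibits the orientation explicitly, which is what the subsequent analysis in the paper will presumably exploit when accounting for unbounded edges via Lemma~\ref{lem:unboundededges}.
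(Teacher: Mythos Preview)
Your proof is correct. The paper's argument is also constructive but phrased as a leaf-peeling process (repeatedly delete a degree-1 vertex and orient its incident edge outward from it) rather than rooting; the two constructions yield the same child-to-parent orientation and are essentially equivalent.
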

\begin{proof}
	The proof is constructive. Define an iterative algorithm that removes a minimum degree node at each step and deletes all its incident edges. Orient the deleted edges so that they start at the node that was removed. Note that a tree always contains a node of degree 1, and removing such a node leads to another tree with one fewer node. Thus, this procedure will orient edges of the original graph in such a way that each node has at most one outgoing edge.
\end{proof}
The other supporting result deals with MSTs in a graph that includes edges of weight zero.
\begin{observation}
	\label{obs:Z}
	Let $w^{(1)} \colon E \rightarrow \mathbb{R}^+$ and $w^{(2)} \colon E \rightarrow \mathbb{R}^+$ be two nonnegative weight functions for an undirected graph $G = (V,E)$. Assume there exists an edge set $Z \subseteq E$ such that 
	\begin{equation*}
		w^{(1)}(i,j) = w^{(2)}(i,j) = 0 \text{ for every $(i,j) \in Z$}.
	\end{equation*}
	Then there exist spanning trees $M_1$ and $M_2$ for $G$ such that $M_i$ is an MST for $G$ with respect to $w^{(i)}$ for $i \in \{1,2\}$, and $M_1 \cap Z = M_2 \cap Z$.
\end{observation}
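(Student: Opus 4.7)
}
The plan is to construct $M_1$ and $M_2$ via Kruskal's algorithm with a coordinated tie-breaking rule on the zero-weight edges. Specifically, I would first fix any spanning forest $F$ of the auxiliary subgraph $(V,Z)$, i.e., a maximal acyclic subset of $Z$. The target will be to show that for each $i \in \{1,2\}$, one can run Kruskal's on $(G, w^{(i)})$ so that the resulting MST $M_i$ satisfies $M_i \cap Z = F$, which immediately yields $M_1 \cap Z = M_2 \cap Z$.

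The key steps in order would be:
\begin{enumerate}[leftmargin=16pt,itemsep=0pt]
\item Choose an arbitrary spanning forest $F$ of the subgraph $(V,Z)$.
\item For each $i \in \{1,2\}$, sort edges of $G$ in nondecreasing order of $w^{(i)}$, and within the zero-weight block place edges of $F$ first, then edges of $Z \setminus F$, then any other zero-weight edges (if any). Run Kruskal's algorithm with this ordering.
\item Observe that during the zero-weight block, every edge of $F$ is added (since $F$ is acyclic), and every edge of $Z \setminus F$ is rejected (since by maximality of $F$ its endpoints already lie in the same component of $F$). Hence $M_i \cap Z = F$.
\item Conclude that Kruskal's correctness guarantees $M_i$ is an MST for $w^{(i)}$, so $M_1 \cap Z = F = M_2 \cap Z$.
\end{enumerate}

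The main subtle point is step~(3): we must verify that when Kruskal's reaches the end of the zero-weight block, the set of edges added is exactly $F$, not some larger or different subset of $Z$. The maximality of $F$ as a spanning forest of $(V,Z)$ is exactly the right property to ensure this, since any edge of $Z \setminus F$ must close a cycle inside $F$. Everything else is standard: Kruskal's optimality is invariant to tie-breaking among equal-weight edges, and the positive-weight edges processed after the zero-weight block are handled identically to a plain run on $G$ with $F$ pre-installed. No interaction between $w^{(1)}$ and $w^{(2)}$ is needed beyond their common vanishing on $Z$, so the construction for $i=1$ and $i=2$ can be carried out independently but with the same choice of $F$, which is what forces the two MSTs to agree on $Z$.
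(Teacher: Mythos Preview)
Your proposal is correct and follows essentially the same approach as the paper: run Kruskal's algorithm for both weight functions using a common fixed ordering on the edges of $Z$ (placed first, since they have weight zero), so that the decisions made on $Z$ are identical in both runs. The paper leaves the ordering on $Z$ arbitrary and simply notes that the first $|Z|$ steps coincide, whereas you make the slightly more explicit choice of ordering a spanning forest $F$ of $(V,Z)$ before $Z\setminus F$ and then verify $M_i\cap Z = F$; this extra bookkeeping is fine but not needed for the conclusion.
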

\begin{proof}
	The proof is constructive. Recall that Kruskal's algorithm finds an MST by ordering edges by weight (starting with the smallest and breaking ties arbitrarily in the ordering) and then greedily adds each edge a growing spanning tree if and only if it connects two previously disconnected components. 
	Fix an arbitrary ordering $\sigma_Z$ of edges in $Z$. When applying Kruskal's algorithm to find minimum spanning trees of $G$ with respect to $w^{(1)}$ and $w^{(2)}$, we can choose orderings for these functions that exactly coincide for the first $|Z|$ edges visited. Namely, we place edges in $Z$ first, using the order given by $\sigma_Z$. 
	The first $|Z|$ steps of Kruskal's algorithm will be identical when building MSTs with respect to $w^{(1)}$ and $w^{(2)}$. Thus, if $M_1$ and $M_2$ are the spanning trees obtained for $w^{(1)}$ and $w^{(2)}$ respectively using this approach, we know these trees will include the same set of edges from $Z$ and discard the same set of edges from $Z$, i.e., $M_1\cap Z = M_2 \cap Z.$
\end{proof}

\subsection{Main approximation guarantees}
Let $T^*_\mathcal{P}$ represent an MST of $G_\mathcal{P}$ with respect to the optimal weight function $w^* \colon E_\mathcal{P} \rightarrow \mathbb{R}^+$ and $\hat{T}_\mathcal{P}$ represent an MST of $G_\mathcal{P}$ with respect to the approximate weight function $\hat{w} \colon E_\mathcal{P} \rightarrow \mathbb{R}^+$. The edges of $T^*_\mathcal{P}$ map to a set of edges $M^*$ in $G_\mathcal{X}$ that optimally solves the metric MST completion problem, and the edges in $\hat{T}_\mathcal{P}$ map to an edge set $\hat{M}$. The weight of these edges is given by:
\begin{align*}
	w_\mathcal{X}(M^*) &= w^*(T^*_\mathcal{P}) \\
	w_\mathcal{X}(\hat{M}) &= \hat{w}(\hat{T}_\mathcal{P}).
\end{align*}
Let $T^*$ be the spanning tree of $G_\mathcal{X}$ defined by combining $\bigcup_{i = 1}^t T_i$ with $M^*$ and $\hat{T}$ be the spanning tree (returned by Algorithm~\ref{alg:mfcapprox}) that combines $\bigcup_{i = 1}^t T_i$ with $\hat{M}$. These have weights given by
\begin{align}
	\label{eq:Tstar}
	w_\mathcal{X}(T^*) &= w^*(T_\mathcal{P}^*) +  \sum_{i = 1}^t w_\mathcal{X}(T_i) \\
	\label{eq:That}
	w_\mathcal{X}(\hat{T}) &= \hat{w}(\hat{T}_\mathcal{P}) +  \sum_{i = 1}^t w_\mathcal{X}(T_i) .
\end{align}
We are now ready to prove the approximation guarantee for \textsf{MFC-Approx}.
\begin{theorem}
	\label{thm:main}
	The spanning tree $\hat{T}$ returned by Algorithm~\ref{alg:mfcapprox} satisfies
	\begin{equation*}
		w_\mathcal{X}(T^*) \leq w_\mathcal{X}(\hat{T}) \leq \beta w_\mathcal{X}(T^*)
	\end{equation*}
	for $\beta = (3 + \sqrt{5})/2 < 2.62$. 
\end{theorem}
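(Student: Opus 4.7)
The lower bound $w_\mathcal{X}(T^*) \le w_\mathcal{X}(\hat T)$ is immediate: $T^*$ is by construction a minimum-weight spanning tree of $G_\mathcal{X}$ that contains $\bigcup_i T_i$, while $\hat T$ is just some other spanning tree of $G_\mathcal{X}$ containing $\bigcup_i T_i$. So the real content is the upper bound, and my plan is to use Eq.~\eqref{eq:That} to reduce it to bounding $\hat w(\hat T_\mathcal{P})$ in terms of $w^*(T_\mathcal{P}^*)$ and $\sum_i w_\mathcal{X}(T_i)$.

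First I would observe that because $\hat T_\mathcal{P}$ is an MST for $\hat w$, we have $\hat w(\hat T_\mathcal{P}) \le \hat w(T_\mathcal{P}^*)$. Then I would partition the edges of the optimal tree $T_\mathcal{P}^*$ into a set $B$ of $\beta$-bounded edges and a set $U$ of $\beta$-unbounded edges, and bound the contributions separately. For the bounded part, by definition $\hat w(B) \le \beta w^*(B) \le \beta w^*(T_\mathcal{P}^*)$. For the unbounded part I would invoke Lemma~\ref{lem:unboundededges}: each $(v_i,v_j)\in U$ satisfies $\hat w_{ij} < \tfrac{\beta}{\beta-1}\min\{w_\mathcal{X}(T_i),w_\mathcal{X}(T_j)\}$. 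The key trick is to avoid a naive double-counting that would lose a factor of $t$: since $U \subseteq T_\mathcal{P}^*$ is a subforest, Observation~\ref{lem:treeorient} lets me orient $T_\mathcal{P}^*$ so each node has at most one outgoing edge, inducing such an orientation on $U$. I then charge each unbounded edge to its tail vertex $v_i$ and bound its weight by $\tfrac{\beta}{\beta-1} w_\mathcal{X}(T_i)$; summing gives $\hat w(U) \le \tfrac{\beta}{\beta-1}\sum_{i=1}^t w_\mathcal{X}(T_i)$.

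Putting these bounds together with Eq.~\eqref{eq:That} yields
\begin{equation*}
w_\mathcal{X}(\hat T) \;\le\; \beta\, w^*(T_\mathcal{P}^*) \;+\; \Bigl(1+\tfrac{\beta}{\beta-1}\Bigr)\sum_{i=1}^t w_\mathcal{X}(T_i),
\end{equation*}
while Eq.~\eqref{eq:Tstar} gives $\beta\,w_\mathcal{X}(T^*) = \beta\,w^*(T_\mathcal{P}^*) + \beta\sum_i w_\mathcal{X}(T_i)$. To conclude $w_\mathcal{X}(\hat T)\le\beta\,w_\mathcal{X}(T^*)$ it suffices that $1+\tfrac{\beta}{\beta-1}\le\beta$, i.e.\ $\beta^2-3\beta+1\ge 0$, whose smallest admissible root is exactly $\beta=(3+\sqrt 5)/2<2.62$.

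The only genuinely delicate step is the charging argument for $U$: without the orientation supplied by Observation~\ref{lem:treeorient}, the most obvious bookkeeping charges each unbounded edge to \emph{both} of its endpoints (or to the worse of the two components), which would blow up the bound by a factor that grows with the number of unbounded edges and ruin the constant. Exploiting the fact that $U$ lives inside a tree, and that $\min$ in Lemma~\ref{lem:unboundededges} lets us pick either endpoint to charge to, is what lets us pay each $T_i$ at most once. Everything else is a clean concatenation of (a) optimality of $\hat T_\mathcal{P}$ for $\hat w$, (b) the definition of $\beta$-bounded edges, and (c) a single quadratic inequality that pins down the optimal value of $\beta$.
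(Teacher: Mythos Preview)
Your argument is correct and in fact cleaner than the paper's. Both proofs arrive at the same intermediate inequality
\[
w_\mathcal{X}(\hat T) \;\le\; \beta\, w^*(T_\mathcal{P}^*) + \Bigl(1+\tfrac{\beta}{\beta-1}\Bigr)\sum_{i=1}^t w_\mathcal{X}(T_i),
\]
and then solve $1+\tfrac{\beta}{\beta-1}=\beta$ to pin down $\beta=(3+\sqrt 5)/2$. The difference lies in how this inequality is reached. You compare $\hat T_\mathcal{P}$ directly against $T_\mathcal{P}^*$ under $\hat w$, split the edges of $T_\mathcal{P}^*$ into bounded and unbounded, and apply the orientation trick to $T_\mathcal{P}^*$. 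The paper instead introduces auxiliary weight functions $w_0^*,\hat w_0$ obtained by zeroing out all $\beta$-unbounded edges, invokes Observation~\ref{obs:Z} to build two auxiliary MSTs $T_0^*,\hat T_0$ that share the same unbounded edge set $U$, and then threads a longer chain of inequalities through these objects. Your route avoids Observation~\ref{obs:Z} entirely and needs only the single comparison $\hat w(\hat T_\mathcal{P})\le \hat w(T_\mathcal{P}^*)$, which is arguably the more transparent use of ``$\hat T_\mathcal{P}$ is optimal for $\hat w$.'' The paper's detour does not buy a sharper constant here; it may have been written with an eye toward the companion result (Theorem~\ref{thm:learning}), but even there the same simplification would go through.

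One small wording point: in your final paragraph you say the $\min$ in Lemma~\ref{lem:unboundededges} ``lets us pick either endpoint to charge to.'' More precisely, the orientation from Observation~\ref{lem:treeorient} fixes which endpoint (the tail) each edge is charged to; the $\min$ is what guarantees that charging to \emph{that} particular endpoint is still a valid upper bound. This is a phrasing nit, not a gap.
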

\begin{proof}
	For our analysis we consider two hypothetical weight functions $w^*_0$ and $\hat{w}_0$ for $G_\mathcal{P} = (V_\mathcal{P}, E_\mathcal{P})$, defined by zeroing out the $\beta$-unbounded edges in $w^*$ and $\hat{w}$:
	\begin{align*}
		w^*_0(v_i, v_j) &= \begin{cases} 
			w^*_{ij} & \text{ if $(v_i,v_j)$ is $\beta$-bounded, i.e., $\hat{w}_{ij} \leq \beta w^*_{ij}$} \\
			0 & \text{ otherwise} 
		\end{cases}\\
		\hat{w}_0(v_i, v_j) &= \begin{cases} 
			\hat{w}_{ij} & \text{ if $(v_i,v_j)$ is $\beta$-bounded, i.e., $\hat{w}_{ij} \leq \beta w^*_{ij}$} \\
			0 & \text{ otherwise}. 
		\end{cases}
	\end{align*}
	By Observation~\ref{obs:Z}, there exist spanning trees $T_0^*$ and $\hat{T}_0$ for $G_\mathcal{P}$ that are optimal with respect to $w_0^*$ and $\hat{w}_0$, respectively, which contain the same exact set of $\beta$-unbounded edges. Let $U$ represent this set of $\beta$-unbounded edges in $\hat{T}_0$ and $T^*_0$. Let $B^*$ be the set of $\beta$-bounded edges in $T^*_0$ and $\hat{B}$ be the set of $\beta$-bounded edges in $\hat{T}_0$. Because $\hat{T}_\mathcal{P}$ is an MST with respect to $\hat{w}$ we know that:
	\begin{align}
		\label{eq:hats}
		\hat{w}(\hat{T}_\mathcal{P}) &\leq \hat{w}(\hat{T}_0) = \hat{w}(U) + \hat{w}(\hat{B}).
	\end{align}
	We will use this to upper bound the weight of $\hat{T}_\mathcal{P}$ in terms of $T^*$. First we claim that
	\begin{equation}
		\label{eq:Lhatbound}
		\hat{w}(\hat{B}) \leq \beta w^*(T_\mathcal{P}^*).
	\end{equation}
	This follows from the following sequence of inequalities:
	\begin{align*}
		\hat{w}(\hat{B}) 
		&= \hat{w}_0(\hat{B}) & \text{ since $\hat{w}$ and $\hat{w}_0$ coincide on $\beta$-bounded edges} \\
		&= \hat{w}_0(\hat{B}) + \hat{w}_0(U) & \text{ since $\hat{w}_0$ is zero on $\beta$-unbounded edges} \\
		&= \hat{w}_0(\hat{T}_0)  & \text{ since $\hat{T}_0 = \hat{B} \cup U$} \\
		& \leq \hat{w}_0(T_0^*) & \text{ since $\hat{T}_0$ is optimal for $\hat{w}_0$} \\
		& = \hat{w}_0(B^*) & \text{ since $\hat{w}_0$ is zero on $\beta$-unbounded edges} \\
		& \leq \beta w^*_0(B^*) & \text{ since $\hat{w}_0 \leq \beta w^*_0$ on $\beta$-bounded edges}\\
		&= \beta w^*_0(T_0^*) & \text{since $T_0^* = B^* \cup U$ and $w^*_0(U) = 0$}  \\
		&\leq \beta w^*_0(T_\mathcal{P}^*) & \text{since $T_0^*$ is optimal for $w_0^*$}  \\
		& \leq \beta w^*(T_\mathcal{P}^*) & \text{ since $w^*_0 \leq w^*$ for all edges.}
	\end{align*}	
	Next we bound $\hat{w}(U)$. From Lemma~\ref{lem:unboundededges}, we know that $\hat{w}_{ij} \leq {\beta}/({\beta-1}) \min \{w_\mathcal{X}(T_i), w_\mathcal{X}(T_j)\}$ for every $(v_i, v_j) \in U$. Because $\hat{T}_\mathcal{P}$ is a tree on $G_\mathcal{P}$, we know by Observation~\ref{lem:treeorient} that we can orient its edges in such a way that each node in $V_\mathcal{P} = \{v_1, v_2, \hdots, v_t\}$ has at most one outgoing edge. 
	We can therefore assign each $(v_i, v_j) \in U$ to one of its nodes in such a way that each node in $V_\mathcal{P}$ is assigned at most one edge from $U$. Assume without loss of generality that we write edges in such a way that edge $(v_i, v_j) \in U$ is assigned to node $v_i$. Thus,
	\begin{equation}
		\label{eq:small}
		\hat{w}(U) = \sum_{(v_i, v_j) \in U} \hat{w}_{ij} \leq \sum_{(v_i, v_j) \in U} \frac{\beta}{\beta-1} w_\mathcal{X}(T_i) \leq \frac{\beta}{\beta-1} \sum_{i = 1}^t w_\mathcal{X}(T_i).
	\end{equation}
	Combining these gives our final bound
	\begin{align*}
		w_\mathcal{X}(\hat{T}) 
		&= \hat{w}(\hat{T}_\mathcal{P}) + \sum_{i = 1}^t w_\mathcal{X}(T_i)  & \text{by Eq.~\eqref{eq:That}}\\
		&\leq \hat{w}(\hat{B}) + \hat{w}(U) +  \sum_{i = 1}^t w_\mathcal{X}(T_i)  & \text{by Eq.~\eqref{eq:hats}}\\
		& \leq \beta w^*(T^*_\mathcal{P}) + \left(\frac{\beta}{\beta-1} + 1\right) \sum_{i = 1}^t w_\mathcal{X}(T_i) & \text{by Eqs.~\eqref{eq:Lhatbound} and~\eqref{eq:small}}\\
		&\leq \max \left\{ \beta, \frac{\beta}{\beta - 1} + 1 \right\} \left(w^*(T_\mathcal{P}^*) +  \sum_{i = 1}^k w_\mathcal{X}(T_i) \right)  \\
		&= \beta w_\mathcal{X}(T^*) &\text{ by Eq.~\eqref{eq:Tstar} and our choice of $\beta$.}
	\end{align*}
	For the last step that we have specifically chosen $\beta = (3+\sqrt{5})/2$ to ensure that $\beta = 1 + \beta/(\beta-1)$, as this leads to the best approximation guarantee using the above inequalities. 
\end{proof}

Using a similar proof technique as Theorem~\ref{thm:main} we obtain the following result, showing that Algorithm~\ref{alg:mfcapprox} is a learning-augmented algorithm for metric MST whose performance depends on the $\gamma$-overlap of the initial forest. 
\begin{theorem}
	\label{thm:learning}
	Let $G_\mathcal{X}$ be an implicit metric graph and $\mathcal{P}$ be an initial partitioning with $\gamma$-overlap $\gamma = \gamma(\mathcal{P})$. Algorithm~\ref{alg:mfcapprox} returns a spanning tree of $\hat{T}$ of $G_\mathcal{X}$ that satisfies
	\begin{equation}
		w_\mathcal{X}(T_\mathcal{X}) \leq w_\mathcal{X}(\hat{T}) \leq \beta w_\mathcal{X}(T_\mathcal{X})
	\end{equation}
	where $T_\mathcal{X}$ is an MST of $G_\mathcal{X}$ and $\beta = \frac{1}{2}\left(2\gamma + 1 + \sqrt{4\gamma + 1} \right) \leq 2\gamma+ 1$.
\end{theorem}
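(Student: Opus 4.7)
The plan is to adapt the argument of Theorem~\ref{thm:main}, replacing the MFC-optimal spanning tree $T^*$ by a true MST $T_\mathcal{X}$ of $G_\mathcal{X}$, and using the $\gamma$-overlap to convert the initial forest weight $\sum_i w_\mathcal{X}(T_i)$ into an intra-component quantity derived from $T_\mathcal{X}$. The lower bound $w_\mathcal{X}(T_\mathcal{X}) \leq w_\mathcal{X}(\hat{T})$ is immediate since $T_\mathcal{X}$ is a minimum spanning tree, so only the upper bound requires work. I would fix an MST $T_\mathcal{X}$ attaining the maximum in the denominator of $\gamma$ and abbreviate $A = w_\mathcal{X}(T_\mathcal{X})$, $B = w_\mathcal{X}(T_\mathcal{X}(\mathcal{P}))$; then $\sum_{i=1}^t w_\mathcal{X}(T_i) = \gamma B$, $B \leq A$, and the inter-component edges of $T_\mathcal{X}$ total $A-B$ in weight.

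The key new ingredient is the bound $w^*(T^*_\mathcal{P}) \leq A - B$. Contracting each $P_i$ to the node $v_i$ projects the inter-component edges of $T_\mathcal{X}$ to a connected (multi)graph on $V_\mathcal{P}$ (connected because $T_\mathcal{X}$ is). Any spanning tree of this projected graph is a valid MFC completion, and since $w^*_{ij} \leq d(x,y)$ for every $x \in P_i, y \in P_j$, its $w^*$-weight is at most the $w_\mathcal{X}$-weight $A - B$ of the inter-component edges of $T_\mathcal{X}$. Optimality of $T^*_\mathcal{P}$ under $w^*$ then yields the bound.

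Next I would reuse the core inequality from the proof of Theorem~\ref{thm:main}: for every $\beta > 1$, combining Observations~\ref{obs:Z} and~\ref{lem:treeorient} with Lemma~\ref{lem:unboundededges} gives
\begin{equation*}
\hat{w}(\hat{T}_\mathcal{P}) \;\leq\; \beta\, w^*(T^*_\mathcal{P}) + \tfrac{\beta}{\beta-1}\sum_{i=1}^t w_\mathcal{X}(T_i).
\end{equation*}
Adding $\sum_i w_\mathcal{X}(T_i)$ to both sides (cf.\ Eq.~\eqref{eq:That}) and substituting the two bounds above produces
\begin{equation*}
w_\mathcal{X}(\hat{T}) \;\leq\; \beta(A-B) + \Bigl(\tfrac{\beta}{\beta-1}+1\Bigr)\gamma B,
\end{equation*}
which is linear in $B \in [0, A]$. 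Its worst case over $B$ is either $\beta A$ (when the coefficient of $B$ is non-positive) or $\tfrac{2\beta-1}{\beta-1}\gamma A$ (at $B = A$, when that coefficient is positive). Choosing $\beta$ so that the two regimes balance reduces to a quadratic in $\beta$; the relevant root is the claimed $\beta = \tfrac{1}{2}(2\gamma+1+\sqrt{4\gamma+1})$, and the coarser bound $\beta \leq 2\gamma+1$ follows from $\sqrt{4\gamma+1} \leq 2\gamma+1$.

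The main obstacle is the projection step: I must justify that even when the induced subgraphs $T_\mathcal{X}(P_i)$ are forests rather than trees, the projected inter-component edge set still connects $G_\mathcal{P}$, so a spanning tree of $w^*$-weight at most $A - B$ can actually be extracted. Once that combinatorial fact is in place, the remainder is a rescaling of the algebra in Theorem~\ref{thm:main}, with $w_\mathcal{X}(T^*)$ replaced by the pair $(A, B)$ tied together through the $\gamma$-overlap.
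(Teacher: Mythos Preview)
Your proposal is correct and follows essentially the same route as the paper: reuse the core inequality from Theorem~\ref{thm:main}, bound $w^*(T^*_\mathcal{P})$ by the inter-component weight of an MST $T_\mathcal{X}$ via contraction, replace $\sum_i w_\mathcal{X}(T_i)$ by $\gamma$ times the intra-component weight, and balance the two coefficients to obtain the stated $\beta$. Your ``main obstacle'' is not actually one---the contraction argument you already gave (connectivity of $T_\mathcal{X}$ is preserved when each $P_i$ is collapsed to a point) is exactly what the paper uses and does not depend on whether $T_\mathcal{X}(P_i)$ is a tree or a forest.
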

\begin{proof}
	We use the same terminology and notation as in the proof of Theorem~\ref{thm:main}. The only difference is that we do not necessarily use $\beta = (3+ \sqrt{5})/2$. For an arbitrary $\beta \geq 1$, we can still prove in the same way that
	\begin{align}
		\label{eq:start}
		w_\mathcal{X}(\hat{T}) \leq  \beta w^*(T^*_\mathcal{P}) + \left(\frac{\beta}{\beta-1} + 1\right) \sum_{i = 1}^t w_\mathcal{X}(T_i).
	\end{align}
	The $\gamma$-overlap of the initial forest implies there exists an MST $T_\mathcal{X}$ of $G_\mathcal{X}$ satisfying:
	\begin{equation}
		\label{eq:ieratio}
		\sum_{i = 1}^t w_\mathcal{X}(T_i) = \gamma w_\mathcal{X}(I_\mathcal{X}),
	\end{equation}
	where $I_\mathcal{X}$ is the set of edges of $T_\mathcal{X}$ inside components $\mathcal{P}$ of the initial forest. Let $B_\mathcal{X}$ be the set of edges in $T_\mathcal{X}$ that cross between components, so that $w_\mathcal{X}(T_\mathcal{X}) = w_\mathcal{X}(B_\mathcal{X}) + w_\mathcal{X}(I_\mathcal{X})$. Since $T_\mathcal{X}$ is a spanning tree, $B_\mathcal{X}$ must contain a path between every pair of components, meaning that $B_\mathcal{X}$ corresponds to a spanning subgraph of the coarsened graph $G_\mathcal{P}$. Since $T_\mathcal{P}^*$ defines an MST of $G_\mathcal{P}$ with respect to $w^*$, which captures the minimum distances between pairs of components, we know
	\begin{equation}
		w^*(T_\mathcal{P}^*) \leq w_\mathcal{X}(B_\mathcal{X}).
	\end{equation}
	Putting the pieces together we see that
	\begin{equation}
		w_\mathcal{X}(\hat{T}) \leq  \beta w_\mathcal{X}(B_\mathcal{X}) + \left( 1 + \frac{\beta}{\beta-1}\right) \gamma w_\mathcal{X}(I_\mathcal{X}) \leq \max \left\{\beta, \gamma\left(1 + \frac{\beta}{\beta -1}\right)  \right\} w_\mathcal{X}(T_\mathcal{X}).
	\end{equation}
	This will hold for any choice of $\beta \geq 1$. In order to prove the smallest approximation guarantee, we choose $\beta$ satisfying:
	\begin{equation*}
		\beta = \gamma\left(1 + \frac{\beta}{\beta -1}\right).
	\end{equation*}
	The solution for this equation under constraint $\beta \geq 1$ and $\gamma \geq 1$ is 
	\begin{equation*}
		\beta = \frac{1}{2}\left(2\gamma + 1 + \sqrt{4\gamma + 1} \right) \leq 2\gamma+ 1.
	\end{equation*}
\end{proof}

\subsection{Runtime analysis and practical considerations}
Algorithm~\ref{alg:mfcapprox} finds the distance between each point in $\mathcal{X}$ and each of the $t$ component representatives, for a total of $O(nt)$ distance queries. It then finds an MST of a dense graph with ${t \choose 2}$ edges, which has runtime and space requirements of $\tilde{O}(t^2)$. Thus, the algorithm has subquadratic memory and query complexity as long as $t = o(n)$. The runtime is $\tilde{O}(nt\texttt{Q}_\mathcal{X} + t^2)$ where $\texttt{Q}_\mathcal{X}$ is the complexity for one distance query in $\mathcal{X}$, which also is subquadratic as long as $t\texttt{Q}_\mathcal{X} = o(n)$. In settings where $\texttt{Q}_\mathcal{X} = \tilde{O}(1)$, the memory, runtime, and query complexity are all subquadratic as long as $t = o(n)$.

\paragraph{Full runtime using $k$-center initialization.} The practical utility of our full MST pipeline also depends on the time it takes to find an initial forest, which depends on various design choices and trade-offs when using any strategy. For intuition we provide a rough complexity analysis for the $k$-center strategy assuming an idealized case of balanced clusters. The simple $2$-approximation for $k$-center chooses an arbitrary first cluster center, and chooses the $i$th cluster center to be the point with maximum distance from the first $i - 1$ centers~\cite{gonzalez1985clustering}. This requires $O(nt)$ distance queries. We can use the cluster centers as the component representatives for \textsf{MFC-Approx}, which allows us to compute $\hat{w}$ without any additional queries. If clusters are balanced in size, we can compute minimum spanning trees for all clusters using $O(n^2/t)$ queries and memory and a runtime of $\tilde{O}(\texttt{Q}_\mathcal{X}n^2/t)$, simply by querying all inner-cluster edges and running a standard MST algorithm. In this balanced-cluster case, combining the initial forest complexity with the complexity of our MFC algorithm, the entire pipeline for finding a spanning tree takes $\tilde{O}(\texttt{Q}_\mathcal{X}(n^2/t + nt) + t^2)$ time. This is minimized by choosing $t = \sqrt{n}$ clusters, leading to a complexity that grows as $n^{1.5}$. For unbalanced clusters, one must consider different trade-offs for cluster-balancing strategies, which could be beneficial for runtime but may affect initial cluster quality.
We could also improve the runtime at the expense of initial forest quality by not computing an exact MST for each cluster. For example, we could recursively apply our entire MFC framework to find a spanning tree of each cluster.

\paragraph{Practical improvements.} There are several ways to relax our MFC framework to make our approach faster while still satisfying strong approximation guarantees. For metrics with high query complexity, we can use approximate queries with only minor degradation in approximation guarantees. For example, for high-dimensional Euclidean distance we can apply Johnson-Lindenstrauss transformations to reduce the query complexity while approximately maintaining distances. As another relaxation, we can replace the exact nearest neighbor search subroutine in \textsf{MFC-Approx} with an approximate nearest neighbor search. If for some $\varepsilon > 0$ we find a $(1+\varepsilon)$-approximate nearest neighbor in each component for every component representative $s_i$, this will make our approximation guarantees worse by at most a factor $(1+\varepsilon)$. There are also numerous opportunities for parallelization, such as parallelizing distance queries and MST computation for components. 

We can also incorporate heuristics to improve the spanning tree quality of our algorithm with little effect on runtime. As a specific example, when approximating the distance between components $P_i$ and $P_j$ of the initial forest, we could compute $\tilde{x}_i = \argmin_{x \in P_i} d(x,s_j)$ and $\tilde{x}_j = \argmin_{x \in P_j} d(x,s_i)$ and then use the following weight for the coarsened graph:
\begin{equation*}
	\tilde{w}_{ij} = \min \{d(\tilde{x}_i, s_j),d(\tilde{x}_j, s_i), d(\tilde{x}_j,\tilde{x}_i)\}.
\end{equation*}
This differs from Algorithm~\ref{alg:mfcapprox} only in that it additionally checks the distance between $d(\tilde{x}_j,\tilde{x}_i)$ to see if this provides an even closer pair of points between $P_i$ and $P_j$. Although this does not always improve results, it can never be worse in terms of approximations. Figure~\ref{fig:wijhat} provides an example where this strategy would find the optimal distance $w_{ij}^*$, which is strictly better than $\hat{w}_{ij}$. An interesting future direction is to implement this and also explore other heuristics that could improve the practical performance of our method without affecting our theoretical guarantees.

\section{Related Work}
\label{sec:related}
The most relevant related work is discussed within context throughout the manuscript. Here we provide details about connections to (and differences from) additional work on MST algorithms. 

\paragraph{MST algorithms with predictions.}
Our work shares high-level similarities with other work on learning-augmented and query-minimizing algorithms for MSTs. Erlebach et al.~\cite{erlebach2022learning} considered a setting where a (possibly erroneous) prediction is given for each edge weight in a graph (not necessarily a metric graph) and the goal is to minimize the number of (non-erroneous) edge weight queries in order to compute an exact MST. 
Berg et al.~\cite{berg2023online} considered an online setting where edge-weight predictions are all provided a priori but true weights are only revealed in an online fashion. After revealing a true weight, an irrevocable decision must be made as to whether to include the edge in the spanning tree or not. 
Bateni et al.~\cite{bateni2024metric} recently considered MST computations in metric graphs, in settings where one has access both to a weak oracle (providing a similar type of edge-weight prediction) and a strong oracle giving true distances. The authors focus on bounding the number of strong oracle queries needed to find an exact or approximate MST. These prior works share similarities with our work in their goal to minimize certain types of queries. However, they differ from our work in that the learning-augmented information takes the form of edge weight predictions, rather than an initial forest. These works also perform $\Omega(n^2)$ queries in the worst case, which is prohibitive for large $n$.
Our work is distinctive in its focus on subquadratic algorithms for approximate MSTs.

\paragraph{Algorithms for metric MST.}
Many previous papers focus on improving algorithmic guarantees for variants of the metric MST problem, especially for the special case of Euclidean distances. For $d$-dimensional Euclidean space when $d = O(1)$, an exact MST can be computed in $O(n^{2 - 2/(\lceil d/2 \rceil + 1) + \varepsilon})$ time~\cite{agarwal1990euclidean}, and a $(1+\varepsilon)$-approximate solution can be found in time $O(n \log n + (\varepsilon^{-2} \log^2 \frac{1}{\varepsilon})n)$ time, where the big-$O$ notation hides constants of the form $O(1)^d$~\cite{arya2016fast}. For high dimensional spaces, there are also known approaches for obtaining $c$-approximate Euclidean MSTs where subquadratic runtimes depend on the value of the desired approximation factor $c > 1$~\cite{har2013euclidean,harpeled2012ann}. There are also many recent improved theoretical results for parallel and streaming variants of the metric MST problem~\cite{jayaram2024massively,chen2022new,chen2023streamingemst,wang2021fast,azarmehr2024massively,march2010fast}, most of which again focus on the Euclidean case.

\paragraph{Partitioning heuristics for MSTs.}
There are several existing divide-and-conquer heuristics for finding an approximate Euclidean MST~\cite{chen2013clustering,zhong2015fast,mishra2020efficient,jothi2018fast}. Similar to our approach, these begin by partitioning the data into smaller components, using $k$-means~\cite{zhong2015fast,jothi2018fast}, finding components of a $k$-NN graph~\cite{chen2013clustering}, or via recursive partitioning~\cite{mishra2020efficient}. An MST for the entire dataset is obtained using various heuristics for connecting components internally and then connecting disjoint components. This often involves computing an MST on some form of coarsened graph as a substep. Despite high-level similarities, these methods differ from our forest completion framework in that they focus exclusively on Euclidean space, an assumption that is essential for the partitioning schemes used and the techniques for connecting components. The other major difference is that these approaches do not attempt to provide any type of approximation guarantee, which is our main focus.

\paragraph{Alternate uses of the acronym MFC.}
To avoid possible confusion, we highlight two concepts that relate to minimum spanning trees and use the acronym MFC to denote something distinct from \textsc{Metric Forest Completion}. Liu et al.~\cite{liu2014supervised} used MFC to denote \textit{MST-based Feature Clustering}, a technique they used for feature selection. Although this involved computing a minimum spanning tree, it did not consider any notion of forest completion. Prior to this, Kor et al.~\cite{kor2011tight} studied distributed algorithms for verifying whether a tree is a minimum spanning tree of a graph. For this problem, an input graph is assumed to be partitioned among many processors, and the candidate tree consists of a collection of marked edges that are partitioned into a so-called \textit{MST fragment collection} (MFC). These fragments bear a cursory resemblance with the forest used as input to our \textsc{Metric Forest Completion} problem. However, Kor et al.~\cite{kor2011tight} do not focus on metric graphs or completing initial forests. Furthermore, their problem and approach rely on evaluating a full graph in memory, which is intractable in our setting.

\section{Experiments}
We run a large number of numerical experiments on synthetic and real-world datasets to show the practical utility of our MFC framework.
The overall utility of our approach relies both on the performance of \textsf{MFC-Approx} as well as our ability to obtain good initial forests. We therefore evaluate both aspects in practice. Our experiments are designed to address the following questions:
\begin{itemize}[leftmargin=30pt,itemsep = 0pt]
	\item[\textbf{Q1}.] How does this framework perform in terms of runtime and spanning tree cost?
	\item[\textbf{Q2}.] What $\gamma$-overlap (see Eq.~\ref{eq:gamma}) can be achieved in practice by scalable partitioning heuristics?
	\item[\textbf{Q3}.] How does practical performance compare with theoretical bounds in Theorem~\ref{thm:learning}?
	\item[\textbf{Q4}.] How do differences in the structure of the data and choice of distance metric affect performance?
\end{itemize}
As a summary of our findings, we obtain significant reductions in runtime and memory use over exact methods while obtaining very good approximation guarantees (addressing \textbf{Q1}), the $\gamma$ values tend to be very small and typically bounded by a small constant (addressing \textbf{Q2}), and the true approximation to the optimal solution is even better than our theory predicts (addressing \textbf{Q3}). Regarding \textbf{Q4}, these findings persist across a wide range of dataset types (e.g., point clouds, set data, strings) and metrics (Euclidean, Jaccard distance, Hamming distance, and edit distance) on both synthetic and real-world data. One of our key findings is that our framework performs especially well (in terms of $\gamma$-overlap and approximation ratios) on datasets where there is some inherent underlying clustering structure. This is particularly valuable given that clustering applications are a primary motivating use case for large-scale MST computations.

\subsection{Implementation details and experimental setup}
We implemented our algorithms in templated C++ code which allows for easy specialization of different distance metrics. Experiments were run on a research server with two AMD EPYC 7543 32-Core Processors and 1 TB of ram running Ubuntu 20.04.1. The code was compiled with clang version 20.0.0 with O3 using libc++ version 20.0.0. Our code is publicly available on GitHub.\footnote{\url{https://github.com/tommy1019/MetricForestCompletion}}

\paragraph{Generating initial forests.}
To generate initial forests, we first partition data points using the standard greedy 2-approximation algorithm for $k$-center clustering~\cite{gonzalez1985clustering}, as this is simple, fast, and works for arbitrary metrics. We consider results for a range of different 
 component numbers $t = k$.
After partitioning the data, we compute an exact minimum spanning tree for each component, mirroring an approach used by previous divide-and-conquer algorithms for large-scale Euclidean MST approximations~\cite{jothi2018fast,zhong2015fast,mishra2020efficient}. Computing exact MSTs for all $t$ components is often the most expensive step of our MSTC framework as it requires generating all edges inside each component. Even so, we find this leads to significant runtime improvements over applying the exact algorithm to the entire dataset, while achieving extremely good approximation ratios. 

\paragraph{Baseline and evaluation criteria.}
We compare against an exact MST computed by generating all ${n \choose 2}$ edge weights then running Kruskal's algorithm. The bottleneck in Kruskal's algorithm is sorting the $O(n^2)$ edge weights, leading to an $O(n^2 \log n)$ runtime. Although more sophisticated algorithms could achieve a runtime of $O(n^2)$ for arbitrary metrics, Kruskal's algorithm is still fast (optimal up to a log factor in terms of runtime) and has the added benefit of being simple to implement and use in practice. We note furthermore that any speed-up in our baseline algorithm for the full MST would also instantly improve the runtime of our MFC approach, since we use the same solver to find MSTs for the partitions in the initial forest. 

\begin{figure}[hbt!]
	\begin{center}
		\resizebox{\columnwidth}{!}{
			\includegraphics[page=1, width=0.4\paperwidth]{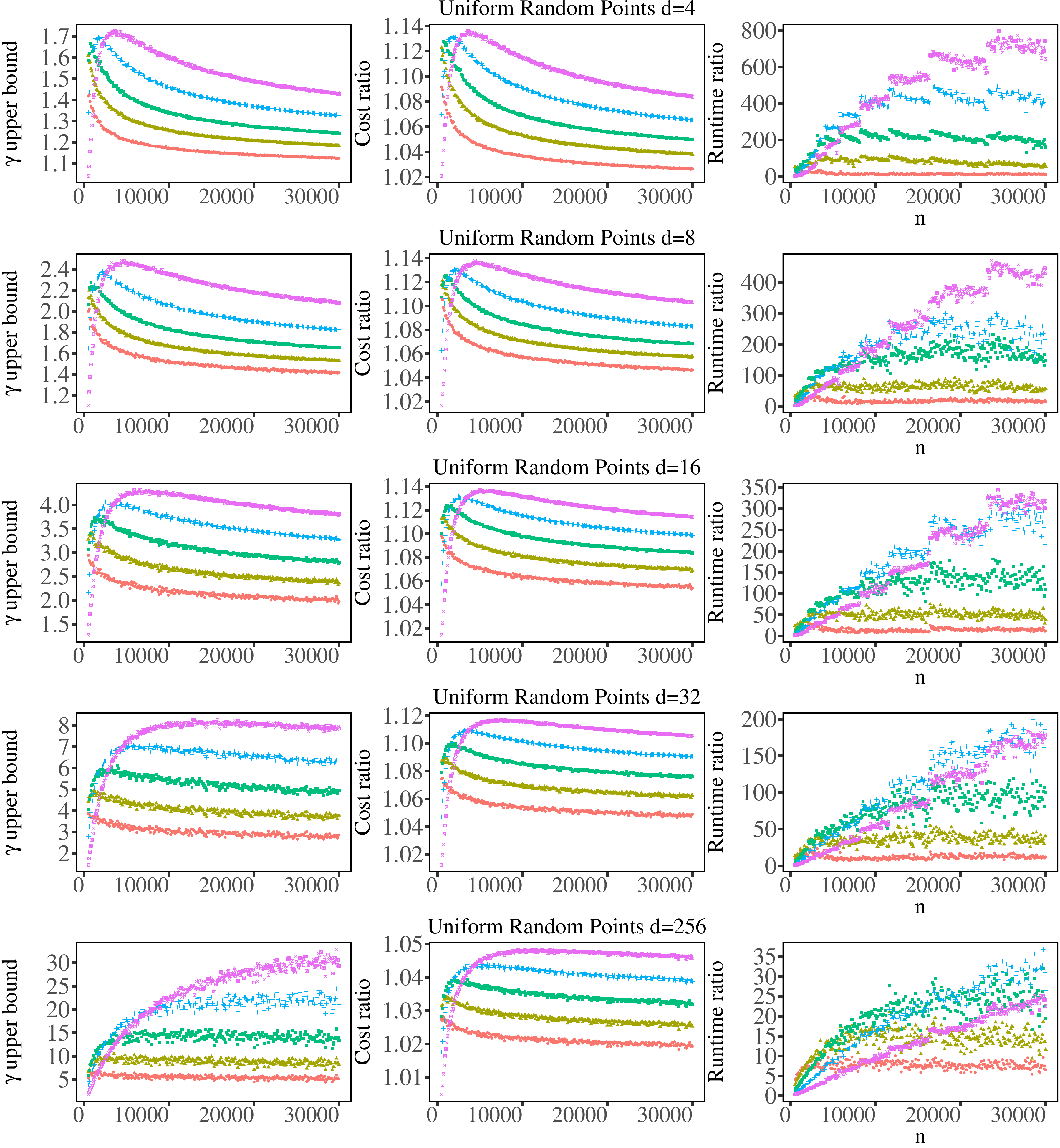}
		}
        \vspace{-12pt}
		$\begin{alignedat}{10}
			\textcolor[RGB]{229, 116, 100}{\bullet} \text{~$t=16$~} &~~~&
			\textcolor[RGB]{153, 154, 35}{\blacktriangle} \text{~$t=32$~} &~~~&
			\textcolor[RGB]{73, 178, 117}{\blacksquare} \text{~t=64~} &~~~&
			\textcolor[RGB]{67, 164, 241}{\boldsymbol{+}} \text{~$t=128$~} &~~~&
			\textcolor[RGB]{211, 108, 236}{\boldsymbol{\boxtimes}} \text{~$t=256$~} 
		\end{alignedat}$
	\end{center}
	\caption{Results on synthetic uniform random data for dimensions $d \in \{4, 8, 16, 32, 256\}.$ Each point in each plot represents an average over $16$ sampled point clouds for a fixed $n$ and choice of component number $t$. Runtime ratio is the ratio between the runtime for the optimal MST algorithm divided by the runtime of our MFC framework (including initial forest generation). Cost ratio is the ratio between the spanning tree weight for our method and the optimal MST weight. The $\gamma$ upper bound is computed by comparing the initial forest overlap with the one optimal MST computed.}
	\label{fig:unif_grid_aux}
\end{figure}

We evaluate our MST pipeline (both the initial forest generation and the forest completion step) in terms of runtime, a bound on $\gamma$-overlap, and cost ratio (i.e., the spanning tree cost divided by the weight of an optimal MST). We find a bound $\bar{\gamma} \geq \gamma(\mathcal{P})$ on $\gamma$-overlap for our initial forest, obtained by computing the overlap between the initial forest and the MST found by the baseline.
This amounts to computing the ratio in Eq.~\eqref{eq:gamma}, except without optimizing over all possible minimum spanning trees in the denominator. When all edge weights are unique, this produces the exact value of $\gamma$. We therefore expect this to be an especially good approximation of $\gamma$ for our Euclidean point cloud datasets, since distances between random points tend to be unique (or nearly all unique).

\subsection{Experiments on Uniform Random Data}
\label{sec:nincreases}
We run a large number of experiments on uniform random data to illustrate the strong performance of our MFC framework in terms of runtime and spanning tree quality (addressing \textbf{Q1} and \textbf{Q2}) in a simple controlled setting. We consider random points clouds of size $n$ in $d$-dimensional Euclidean space where the value for a point in each dimension is drawn uniformly from $[-1,1]$. Figure~\ref{fig:unif_grid_aux} show results for $d \in \{4,8,16,32,256\}$ as $n$ increases from 1000 to 30000, using a different color for each choice of component number $t \in \{16,32,64,128,256\}$ for which we ran our MFC framework.


As $t$ increases, we see improvements in asymptotic speedups over the exact baseline algorithm (up to a 800x speedup for $d = 4$ and 35x speedup for $d = 128$). As $t$ increases the quality of the approximate spanning tree also decreases slightly, but the cost approximation ratio still remains very good (i.e., close to 1) in all cases. The $\gamma$-overlap bound $\bar{\gamma}$ tend to be small for low dimensions, but gets large as $d$ increases (e.g., $\bar{\gamma}$ close to 30 for $d = 256$). For a given $\bar{\gamma}$, Theorem~\ref{thm:learning} guarantees roughly a $(2\bar{\gamma} + 1)$-approximation. For $d = 4$ this approximation ranges from $3$ to $5$, while for $d = 256$ it ranges from $3$ to around $60$. 
Despite this, the cost ratios obtained in practice remain far below these bounds (always very close to 1), and even improve slightly as $d$ increases. This tells us first of all that in practice our approach greatly exceeds our theoretical bounds (addressing \textbf{Q3}). We conjecture that for these high-dimensional point clouds (which lack any underlying structure), there are a large number of spanning trees that are \textit{nearly} optimal in terms of weight but are structurally very different from an optimal spanning tree (which may be unique), which could lead to high ${\gamma}$ values despite our good cost ratios in practice.

\subsection{Improved Results on Clustered Data}
\begin{figure}[t!]
	\begin{center}
		Fashion-MNIST $~~d=784$
		\includegraphics[page=1, width=0.7\paperwidth]{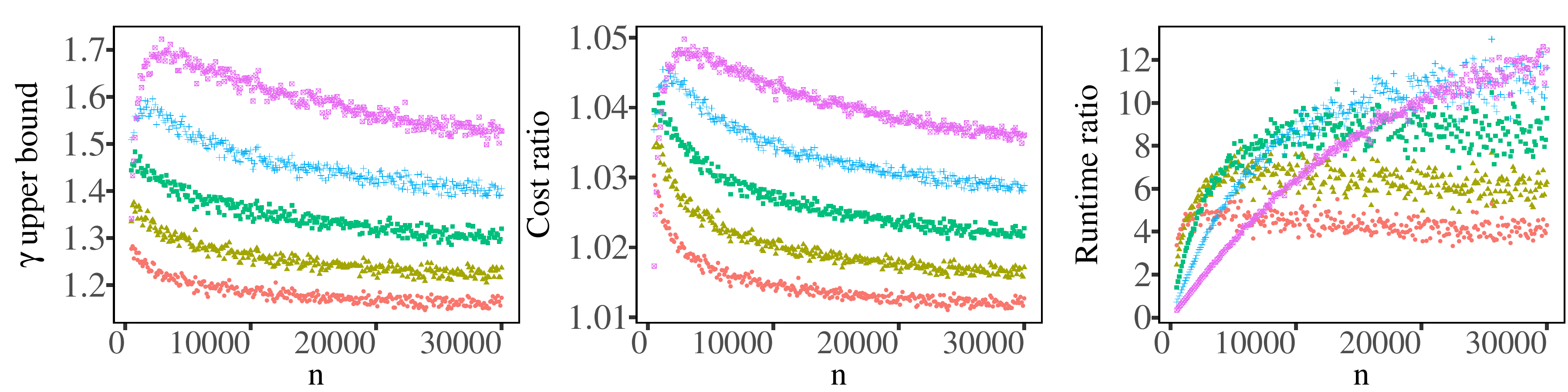}
		$\begin{alignedat}{10}
			\textcolor[RGB]{229, 116, 100}{\bullet} \text{~$t=16$~} &~~~&
			\textcolor[RGB]{153, 154, 35}{\blacktriangle} \text{~$t=32$~} &~~~&
			\textcolor[RGB]{73, 178, 117}{\blacksquare} \text{~t=64~} &~~~&
			\textcolor[RGB]{67, 164, 241}{\boldsymbol{+}} \text{~$t=128$~} &~~~&
			\textcolor[RGB]{211, 108, 236}{\boldsymbol{\boxtimes}} \text{~$t=256$~} 
		\end{alignedat}$
	\end{center}
	\vspace{-10pt}
	\caption{Results for Fashion-MNIST. Each point is the average of 16 samples for fixed $n$ and $t$.}
	\label{fig:fashion_aux}
\end{figure}
\begin{figure}[t!]
	\begin{center}
		\resizebox{\columnwidth}{!}{
			\includegraphics[page=1, width=0.45\paperwidth]{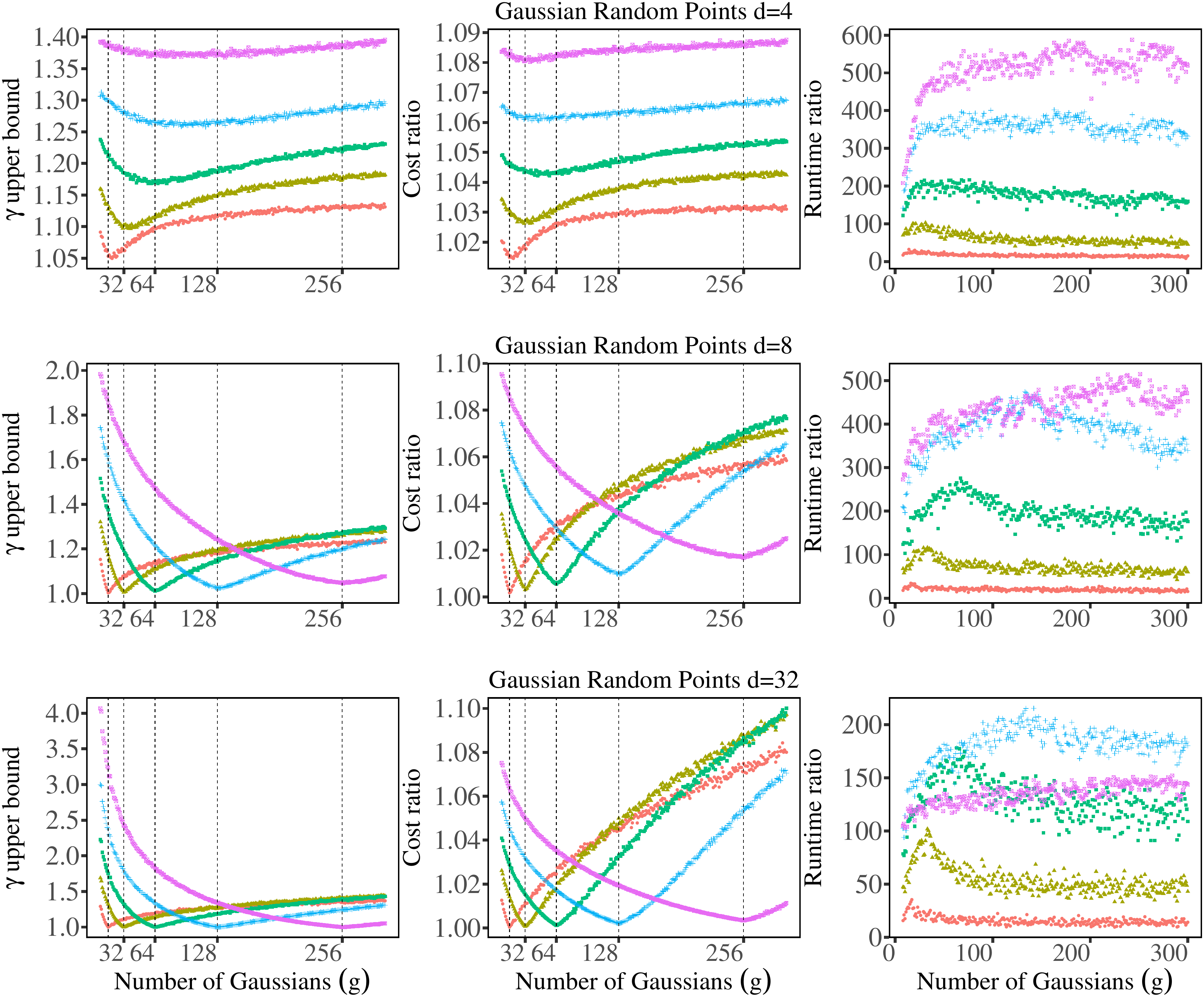}
		}
		$\begin{alignedat}{10}
			\textcolor[RGB]{229, 116, 100}{\bullet} \text{~$t=16$~} &~~~&
			\textcolor[RGB]{153, 154, 35}{\blacktriangle} \text{~$t=32$~} &~~~&
			\textcolor[RGB]{73, 178, 117}{\blacksquare} \text{~t=64~} &~~~&
			\textcolor[RGB]{67, 164, 241}{\boldsymbol{+}} \text{~$t=128$~} &~~~&
			\textcolor[RGB]{211, 108, 236}{\boldsymbol{\boxtimes}} \text{~$t=256$~} 
		\end{alignedat}$
	\end{center}
	\vspace{-10pt}
	\caption{Experimental results on synthetic data with clustering structure. Each synthetic dataset is drawn from a mixture of $g$ Gaussians in $d$-dimensional Euclidean space, with $\lfloor20000/g\rfloor$ points per Gaussian. We show results for $d \in \{4,8,32\}$  Cost ratio and $\gamma$-overlap bound $\bar{\gamma}$ are typically minimized when the number of Gaussians $g$ roughly matches the number of components $t$ used by our MSTC framework. Compared with results for uniform data (Figure~\ref{fig:unif_grid_aux}), we see our MFC framework performs even better on datasets with inherent clustering structure, especially when one has a good estimate for the number of underlying true clusters. Results shown are averages over 16 dataset samples for each $g$.}
	\label{fig:gaussian_plots}
\end{figure}
Although $\bar{\gamma}$ is large for high-dimensional uniform random data, the lack of structure in this synthetic data is atypical for most applications. For example, metric spanning trees are often computed as the first step in clustering data. Therefore, as part of our answers to Questions \textbf{Q2} (on $\gamma$ values) and \textbf{Q4} (on the structure of the data), we explore how our framework performs on datasets with some level of clustering structure. We begin by running a similar set of experiments as in Figure~\ref{fig:unif_grid_aux}, but we instead sample points uniformly at random from the Fashion-MNIST dataset~\cite{fashion_mnist}, where points represent images of clothing items. Although images do not perfectly cluster into different classes, there is still clear underlying structure (e.g., we expect images of sneakers to look different from images of trousers). Figure~\ref{fig:fashion_aux} shows that for this dataset, our MFC framework achieves slightly better cost ratios and far better $\gamma$-overlap bounds than for uniform random data (Figure~\ref{fig:unif_grid_aux}), even though the dimension of Fashion-MNIST points ($d = 784$) is much larger than the dimensions we considered for uniform data. We also observe similar trends in runtime improvements as in Figure~\ref{fig:unif_grid_aux}.

We further examine the effect of clustering structure in a controlled setting by generating mixtures of Gaussians. In more detail, we generate a mixture of $g$ Gaussians in $d$-dimensional Euclidean space (for $d \in \{4, 8,32\}$) with $g$ ranging from 8 to 300. For each Gaussian cluster we generate $\floor{20000 / g}$ points, so that $n \approx 20000$ for each dataset. The mean of each Gaussian is chosen uniformly at random from an $8$-dimensional box where each axis ranges from $-5$ to $5$. The standard deviation of each dimension of each Gaussian is chosen uniformly at random between $0.5$ and $0.8$. In practice these parameters generate good (but not perfect) clustering structure. We then run our MFC framework for each choice of initial component number $t \in \{16,32,64,128,256\}$.  Figure~\ref{fig:gaussian_plots} shows $\bar{\gamma}$ values and cost ratios for a range of component numbers $t$ as $g$ varies. Significantly, the quality of our spanning trees gets better and better as the number of components $t$ gets closer to the number of Gaussians $g$ (i.e., true number of clusters in the data). This is illustrated by bowl-shaped curves for $\bar{\gamma}$ and cost ratio that are minimized when $t \approx g$. Furthermore, even when $t$ and $g$ are not close to each other, $\bar{\gamma}$ values and cost ratios are still smaller than the ones obtained on uniform random data with the same dimension (see Figure~\ref{fig:unif_grid_aux}). This indicates that clustering structure helps our framework even when the number of underlying clusters is unknown.

\subsection{Additional Results on Real-world Data}
\begin{table}[t]
	\centering
	\caption{List of different datasets used for real world experiments}
	\label{tab:data}
	\begin{tabular}{ lll } 
		\toprule
		\textbf{Dataset} & \textbf{Type of Data} & \textbf{Distance Metric} \\
		\cmidrule(lr){1-3}
		Kosarak\cite{annBenchmarks, kosarak} & Sets & Jaccard distance \\
		MovieLens-10M\cite{annBenchmarks, movie_lens} & Sets & Jaccard distance \\
		Fashion-MNIST\cite{annBenchmarks, fashion_mnist} & $784$ dimensional points & Euclidean distance \\
		Name Dataset\cite{nameDataset2021} & Strings & Levenshtein edit distance \\
		GreenGenes\cite{greengenesDataset} & Strings & Levenshtein edit distance \\
		GreenGenes Aligned\cite{greengenesDataset} & 7682 character strings & Hamming distance\\
		\bottomrule
	\end{tabular}
\end{table}
We continue to address \textbf{Q4} by running experiments on several real-world datasets that have often been used as benchmarks for similarity search algorithms and clustering algorithms. These involve a variety of different types of data (e.g., point cloud data, sets, strings) and distance functions (e.g., Euclidean, Jaccard, Hamming, Levenshtein edit distance). For some datasets, we are unable to compute an optimal minimum spanning tree in a reasonable amount of time and space, so we subsample the data 16 times for a fixed value of $n$ and report mean results and standard deviations. We run experiments on implicit graphs with up to $n = 39774$ nodes, which means we are considering graphs with 700 million edges. Our MFC framework in fact scales up to even larger graphs, but we restrict to graphs with under 1 billion edges since in this regime we are still able to compute an optimal MST to use as a point of comparison. We summarize the dataset type and distance metric in Table~\ref{tab:data}, and provide more details for each dataset below.

\begin{itemize}[label = {},leftmargin = 10pt, itemsep = 0pt]
	\item \textbf{Cooking.}\footnote{\url{https://www.cs.cornell.edu/~arb/data/cat-edge-Cooking/}} Each data object is a set of food ingredients defining a recipe. There are 6714 ingredients and 39774 recipes. We use Jaccard distance. The original dataset comes from the \emph{What's Cooking?} Kaggle challenge~\cite{kaggle2015cooking}. We use the parsed dataset of Amburg et al.~\cite{amburg2020clustering}.
	\item \textbf{MovieLens-10M.}\footnote{\url{https://github.com/erikbern/ann-benchmarks}} This is a set of sets derived from movie ratings~\cite{movie_lens}. We consider a subset of the data provided on the ANN Benchmarks Repository~\cite{annBenchmarks}, restricting to sets with 64 items or more, in order to work with a dataset where $n \approx 30000$. We apply Jaccard distance.
	\item \textbf{Kosarak.}\footnote{\url{http://fimi.uantwerpen.be/data/kosarak.dat}} This dataset is derived from click-stream data from a Hungarian news portal~\cite{kosarak}. The original data comprises 990002 sets defined over a collection of 41270 items. We restricted to sets of size at least 40, leading to a set of 32295 sets. We apply Jaccard distance.
	\item \textbf{Fashion-MNIST.}\footnote{\url{https://github.com/zalandoresearch/fashion-mnist}} Each point in the Fashion-MNIST dataset represents a $28 \times 28$ grayscale image of an item of clothing (e.g., sneakers, trousers), encoded as a vector of size $d = 784$. The total number of class labels is 10. We use Euclidean distance for this dataset.
	\item \textbf{Names-US.}\footnote{\url{https://github.com/philipperemy/name-dataset}} Each data object is a last name for someone in the United States. This provides a natural benchmark for computing spanning trees on short sequence data using Levenshtein edit distance. We treat each UTF-8 code point as a separate character. Last names in the dataset vary in length from $0$ to $252$ characters where $0$ indicates no last name. The average length of the last names is $6.67$ with a standard deviation of $2.470$. Computing an exact minimum spanning tree for the entire dataset is infeasible, so we consider samples of names of size $n = 30000$.
	
	\item \textbf{GreenGenes-Unaligned and GreenGenes-Aligned.} GreenGenes is a chimera-checked 16S megagenomic sequence dataset~\cite{greengenesDataset}, which is frequently used as benchmarks for sequence clustering. We use version 13.5 of the data, accessed by following instructions on the USEARCH benchmarks website.\footnote{\url{https://www.drive5.com/usearch/benchmark_ggclust.html}} The data is provided in two formats: as unaligned variable length sequences (ranging in length from $1111$ to $2368$; mean length is $1401.06$ with a standard deviation of $57.083$), and as aligned sequences of a fixed length of 7682. We use Levenshtein edit distance for the unaligned sequences and Hamming distance for the aligned sequences. 
\end{itemize}

\begin{table}[t!]
	\centering
	\caption{Results for real-world datasets. Runtimes are in minutes. For the last 4 datasets we report averages and standard deviations for 16 different samples of size $n$. The last three rows for each dataset report the proportion of time spent on each step of the MFC framework.}
	\vspace{-10pt}
	\label{tab:realdata_detailed}
	\resizebox{\columnwidth}{!}{
		\begin{tabular}{ llllllll } 
			\toprule
			\textbf{Dataset} & & \textbf{OPT} & $\mathbf{t=16}$ & $\mathbf{t=32}$ & $\mathbf{t=64}$ & $\mathbf{t=128}$ & $\mathbf{t=256}$ \\
			\cmidrule(lr){1-8}
			Cooking & $ \bar{\gamma} $ & - & $ 1.770  $ & $ 2.014  $ & $ 2.222  $ & $ 2.421  $ & $ 3.144  $ \\
			& Cost Ratio & 1 & $ 1.040  $ & $ 1.051  $ & $ 1.059  $ & $ 1.069  $ & $ 1.089  $ \\
			$ n=39774$ & Runtime Ratio & 1 & $ 4.391  $ & $ 6.524  $ & $ 8.689  $ & $ 9.966  $ & $ 15.598  $ \\
			& Runtime (mins) & $24.2 \scriptstyle{ \pm 0.61 } $ & $ 5.3  $ & $ 3.5  $ & $ 2.7  $ & $ 2.3  $ & $ 1.5  $ \\
			& k-centering \% & - & $ 0.006  $ & $ 0.016  $ & $ 0.038  $ & $ 0.082  $ & $ 0.321  $ \\
			& Sub-MST \% & - & $ 0.989  $ & $ 0.971  $ & $ 0.923  $ & $ 0.826  $ & $ 0.355  $ \\
			& MFC-approx \% & - & $ 0.004  $ & $ 0.013  $ & $ 0.038  $ & $ 0.092  $ & $ 0.324  $ \\
			\cmidrule(lr){1-8}
			MovieLens & $ \bar{\gamma} $ & - & $ 1.395  $ & $ 1.513  $ & $ 1.690  $ & $ 1.955  $ & $ 2.283  $ \\
			& Cost Ratio & 1 & $ 1.010  $ & $ 1.012  $ & $ 1.017  $ & $ 1.022  $ & $ 1.028  $ \\
			$ n=33240$ & Runtime Ratio & 1 & $ 4.026  $ & $ 5.014  $ & $ 6.287  $ & $ 9.337  $ & $ 9.529  $ \\
			& Runtime (mins) & $536.0 \scriptstyle{ \pm 15.58 } $ & $ 120.3  $ & $ 96.6  $ & $ 77.0  $ & $ 51.9  $ & $ 50.8  $ \\
			& k-centering \% & - & $ 0.016  $ & $ 0.028  $ & $ 0.054  $ & $ 0.130  $ & $ 0.238  $ \\
			& Sub-MST \% & - & $ 0.978  $ & $ 0.959  $ & $ 0.912  $ & $ 0.765  $ & $ 0.508  $ \\
			& MFC-approx \% & - & $ 0.006  $ & $ 0.013  $ & $ 0.034  $ & $ 0.104  $ & $ 0.253  $ \\
			\cmidrule(lr){1-8}
			Kosarak & $ \bar{\gamma} $ & - & $ 1.289  $ & $ 1.737  $ & $ 2.281  $ & $ 2.788  $ & $ 3.129  $ \\
			& Cost Ratio & 1 & $ 1.007  $ & $ 1.013  $ & $ 1.020  $ & $ 1.025  $ & $ 1.029  $ \\
			$ n=32295$ & Runtime Ratio & 1 & $ 2.682  $ & $ 5.994  $ & $ 15.159  $ & $ 18.174  $ & $ 12.583  $ \\
			& Runtime (mins) & $182.2 \scriptstyle{ \pm 0.26 } $ & $ 68.0  $ & $ 30.4  $ & $ 12.0  $ & $ 10.0  $ & $ 14.5  $ \\
			& k-centering \% & - & $ 0.010  $ & $ 0.036  $ & $ 0.146  $ & $ 0.363  $ & $ 0.464  $ \\
			& Sub-MST \% & - & $ 0.985  $ & $ 0.935  $ & $ 0.725  $ & $ 0.339  $ & $ 0.138  $ \\
			& MFC-approx \% & - & $ 0.005  $ & $ 0.028  $ & $ 0.128  $ & $ 0.297  $ & $ 0.397  $ \\
			\cmidrule(lr){1-8}
			Names-US & $ \bar{\gamma} $ & - & $ 1.020  \scriptstyle{ \pm  0.02  } $ & $ 1.059  \scriptstyle{ \pm  0.02  } $ & $ 1.139  \scriptstyle{ \pm  0.04  } $ & $ 1.219  \scriptstyle{ \pm  0.05  } $ & $ 1.322  \scriptstyle{ \pm  0.07  } $ \\
			& Cost Ratio & 1 & $ 1.005  \scriptstyle{ \pm  0.00  } $ & $ 1.015  \scriptstyle{ \pm  0.01  } $ & $ 1.034  \scriptstyle{ \pm  0.01  } $ & $ 1.051  \scriptstyle{ \pm  0.01  } $ & $ 1.071  \scriptstyle{ \pm  0.01  } $ \\
			$ n=30000$ & Runtime Ratio & 1 & $ 0.923  \scriptstyle{ \pm  0.16  } $ & $ 0.954  \scriptstyle{ \pm  0.18  } $ & $ 0.939  \scriptstyle{ \pm  0.17  } $ & $ 1.027  \scriptstyle{ \pm  0.19  } $ & $ 1.138  \scriptstyle{ \pm  0.21  } $ \\
			& Runtime (mins) & $2.0 \scriptstyle{ \pm 0.28 } $ & $ 2.1  \scriptstyle{ \pm  0.2  } $ & $ 2.1  \scriptstyle{ \pm  0.2  } $ & $ 2.1  \scriptstyle{ \pm  0.2  } $ & $ 1.9  \scriptstyle{ \pm  0.2  } $ & $ 1.7  \scriptstyle{ \pm  0.1  } $ \\
			& k-centering \% & - & $ 0.003  \scriptstyle{ \pm  0.00  } $ & $ 0.005  \scriptstyle{ \pm  0.00  } $ & $ 0.009  \scriptstyle{ \pm  0.00  } $ & $ 0.019  \scriptstyle{ \pm  0.00  } $ & $ 0.038  \scriptstyle{ \pm  0.00  } $ \\
			& Sub-MST \% & - & $ 0.995  \scriptstyle{ \pm  0.00  } $ & $ 0.992  \scriptstyle{ \pm  0.00  } $ & $ 0.985  \scriptstyle{ \pm  0.00  } $ & $ 0.970  \scriptstyle{ \pm  0.01  } $ & $ 0.937  \scriptstyle{ \pm  0.01  } $ \\
			& MFC-approx \% & - & $ 0.001  \scriptstyle{ \pm  0.00  } $ & $ 0.003  \scriptstyle{ \pm  0.00  } $ & $ 0.005  \scriptstyle{ \pm  0.00  } $ & $ 0.011  \scriptstyle{ \pm  0.00  } $ & $ 0.024  \scriptstyle{ \pm  0.01  } $ \\
			\cmidrule(lr){1-8}
			GreenGenes-Unalign. & $ \bar{\gamma} $ & - & $ 1.388  \scriptstyle{ \pm  0.08  } $ & $ 1.460  \scriptstyle{ \pm  0.07  } $ & $ 1.460  \scriptstyle{ \pm  0.06  } $ & $ 1.402  \scriptstyle{ \pm  0.04  } $ & $ 1.314  \scriptstyle{ \pm  0.02  } $ \\
			& Cost Ratio & 1 & $ 1.092  \scriptstyle{ \pm  0.02  } $ & $ 1.104  \scriptstyle{ \pm  0.01  } $ & $ 1.095  \scriptstyle{ \pm  0.01  } $ & $ 1.075  \scriptstyle{ \pm  0.01  } $ & $ 1.055  \scriptstyle{ \pm  0.00  } $ \\
			$ n=2500$ & Runtime Ratio & 1 & $ 2.555  \scriptstyle{ \pm  0.55  } $ & $ 3.270  \scriptstyle{ \pm  0.52  } $ & $ 3.156  \scriptstyle{ \pm  0.36  } $ & $ 2.116  \scriptstyle{ \pm  0.08  } $ & $ 1.159  \scriptstyle{ \pm  0.04  } $ \\
			& Runtime (mins) & $239.3 \scriptstyle{ \pm 6.24 } $ & $ 98.0  \scriptstyle{ \pm  23.0  } $ & $ 74.9  \scriptstyle{ \pm  11.9  } $ & $ 76.7  \scriptstyle{ \pm  8.8  } $ & $ 113.2  \scriptstyle{ \pm  4.0  } $ & $ 206.5  \scriptstyle{ \pm  6.0  } $ \\
			& k-centering \% & - & $ 0.068  \scriptstyle{ \pm  0.01  } $ & $ 0.172  \scriptstyle{ \pm  0.03  } $ & $ 0.331  \scriptstyle{ \pm  0.03  } $ & $ 0.448  \scriptstyle{ \pm  0.01  } $ & $ 0.492  \scriptstyle{ \pm  0.01  } $ \\
			& Sub-MST \% & - & $ 0.885  \scriptstyle{ \pm  0.03  } $ & $ 0.692  \scriptstyle{ \pm  0.06  } $ & $ 0.377  \scriptstyle{ \pm  0.08  } $ & $ 0.134  \scriptstyle{ \pm  0.01  } $ & $ 0.047  \scriptstyle{ \pm  0.01  } $ \\
			& MFC-approx \% & - & $ 0.047  \scriptstyle{ \pm  0.01  } $ & $ 0.136  \scriptstyle{ \pm  0.03  } $ & $ 0.292  \scriptstyle{ \pm  0.05  } $ & $ 0.418  \scriptstyle{ \pm  0.01  } $ & $ 0.461  \scriptstyle{ \pm  0.01  } $ \\
			\cmidrule(lr){1-8}
			GreenGenes-Aligned & $ \bar{\gamma} $ & - & $ 1.224  \scriptstyle{ \pm  0.07  } $ & $ 1.368  \scriptstyle{ \pm  0.10  } $ & $ 1.466  \scriptstyle{ \pm  0.06  } $ & $ 1.512  \scriptstyle{ \pm  0.05  } $ & $ 1.531  \scriptstyle{ \pm  0.03  } $ \\
			& Cost Ratio & 1 & $ 1.074  \scriptstyle{ \pm  0.02  } $ & $ 1.117  \scriptstyle{ \pm  0.03  } $ & $ 1.143  \scriptstyle{ \pm  0.02  } $ & $ 1.147  \scriptstyle{ \pm  0.01  } $ & $ 1.141  \scriptstyle{ \pm  0.01  } $ \\
			$ n=30000$ & Runtime Ratio & 1 & $ 1.604  \scriptstyle{ \pm  0.51  } $ & $ 2.375  \scriptstyle{ \pm  0.82  } $ & $ 3.918  \scriptstyle{ \pm  1.99  } $ & $ 5.849  \scriptstyle{ \pm  2.83  } $ & $ 7.026  \scriptstyle{ \pm  1.69  } $ \\
			& Runtime (mins) & $33.1 \scriptstyle{ \pm 0.17 } $ & $ 22.1  \scriptstyle{ \pm  4.9  } $ & $ 15.1  \scriptstyle{ \pm  3.7  } $ & $ 9.9  \scriptstyle{ \pm  3.1  } $ & $ 6.7  \scriptstyle{ \pm  2.6  } $ & $ 5.0  \scriptstyle{ \pm  1.3  } $ \\
			& k-centering \% & - & $ 0.003  \scriptstyle{ \pm  0.00  } $ & $ 0.010  \scriptstyle{ \pm  0.00  } $ & $ 0.033  \scriptstyle{ \pm  0.02  } $ & $ 0.098  \scriptstyle{ \pm  0.05  } $ & $ 0.232  \scriptstyle{ \pm  0.05  } $ \\
			& Sub-MST \% & - & $ 0.994  \scriptstyle{ \pm  0.00  } $ & $ 0.983  \scriptstyle{ \pm  0.01  } $ & $ 0.940  \scriptstyle{ \pm  0.03  } $ & $ 0.817  \scriptstyle{ \pm  0.09  } $ & $ 0.550  \scriptstyle{ \pm  0.12  } $ \\
			& MFC-approx \% & - & $ 0.002  \scriptstyle{ \pm  0.00  } $ & $ 0.007  \scriptstyle{ \pm  0.00  } $ & $ 0.027  \scriptstyle{ \pm  0.02  } $ & $ 0.084  \scriptstyle{ \pm  0.05  } $ & $ 0.216  \scriptstyle{ \pm  0.07  } $ \\
			\cmidrule(lr){1-8}
			Fashion-MNIST & $ \bar{\gamma} $ & - & $ 1.173  \scriptstyle{ \pm  0.02  } $ & $ 1.237  \scriptstyle{ \pm  0.03  } $ & $ 1.320  \scriptstyle{ \pm  0.05  } $ & $ 1.405  \scriptstyle{ \pm  0.03  } $ & $ 1.527  \scriptstyle{ \pm  0.05  } $ \\
			& Cost Ratio & 1 & $ 1.013  \scriptstyle{ \pm  0.00  } $ & $ 1.017  \scriptstyle{ \pm  0.00  } $ & $ 1.023  \scriptstyle{ \pm  0.00  } $ & $ 1.029  \scriptstyle{ \pm  0.00  } $ & $ 1.036  \scriptstyle{ \pm  0.00  } $ \\
			$ n=30000$ & Runtime Ratio & 1 & $ 4.675  \scriptstyle{ \pm  1.42  } $ & $ 7.176  \scriptstyle{ \pm  2.55  } $ & $ 10.027  \scriptstyle{ \pm  2.85  } $ & $ 12.129  \scriptstyle{ \pm  2.64  } $ & $ 12.698  \scriptstyle{ \pm  1.83  } $ \\
			& Runtime (mins) & $11.4 \scriptstyle{ \pm 0.31 } $ & $ 2.7  \scriptstyle{ \pm  0.8  } $ & $ 1.8  \scriptstyle{ \pm  0.7  } $ & $ 1.2  \scriptstyle{ \pm  0.4  } $ & $ 1.0  \scriptstyle{ \pm  0.2  } $ & $ 0.9  \scriptstyle{ \pm  0.1  } $ \\
			& k-centering \% & - & $ 0.008  \scriptstyle{ \pm  0.00  } $ & $ 0.024  \scriptstyle{ \pm  0.01  } $ & $ 0.067  \scriptstyle{ \pm  0.02  } $ & $ 0.162  \scriptstyle{ \pm  0.03  } $ & $ 0.339  \scriptstyle{ \pm  0.05  } $ \\
			& Sub-MST \% & - & $ 0.986  \scriptstyle{ \pm  0.00  } $ & $ 0.963  \scriptstyle{ \pm  0.01  } $ & $ 0.896  \scriptstyle{ \pm  0.03  } $ & $ 0.746  \scriptstyle{ \pm  0.06  } $ & $ 0.462  \scriptstyle{ \pm  0.07  } $ \\
			& MFC-approx \% & - & $ 0.004  \scriptstyle{ \pm  0.00  } $ & $ 0.013  \scriptstyle{ \pm  0.01  } $ & $ 0.036  \scriptstyle{ \pm  0.01  } $ & $ 0.090  \scriptstyle{ \pm  0.02  } $ & $ 0.197  \scriptstyle{ \pm  0.03  } $ \\
			\bottomrule
		\end{tabular}
	}
\end{table}

Table~\ref{tab:realdata_detailed} displays cost ratios, runtimes, and $\gamma$-overlap bound $\bar{\gamma}$ for our MFC framework. The table includes details for the proportion of time spent on the three main steps of our approach: 
\begin{enumerate}[itemsep=0pt,leftmargin=10pt,label = {}]
	\item \textbf{$k$-centering}: obtaining the partitioning $\mathcal{P}$ using a greedy $2$-approximaton for $k$-centering~\cite{gonzalez1985clustering}.
	\item \textbf{Sub-MST}: Using Kruskal's algorithm to find optimal MSTs $\{T_i\}_{i=1}^t$ for each component in $\mathcal{P}$.
	\item \textbf{MFC-Approx}: Approximating the MFC problem using \textsf{MFC-Approx}.
\end{enumerate}
As anticipated, Sub-MST is typically the most expensive step. However, as $t$ increases, the $k$-centering step and the \textsf{MFC-Approx} algorithm become a significant portion of the overall runtime, and even dominate in extreme cases. We remark that the \textsf{MFC-Approx} step is always dominated by the time it takes to form the coarsened graph $G_\mathcal{P}$. The time spent finding an MST of $G_\mathcal{P}$ is negligible for all datasets and choices of $t$ we considered. 

We note several trends in the performance of our MFC framework from Table~\ref{tab:realdata_detailed}. Across all datasets (and corresponding distance metrics), the $\bar{\gamma}$ values tend to be very good: usually below 2 and never above 3.2. As expected, $\bar{\gamma}$ values get slightly worse as $t$ increases. This is also true for cost ratios, but cost ratios nevertheless remain very good in all cases (always below 1.2 and typically much better). In general, runtimes improve as $t$ increases up to a certain point. Once $t$ becomes too large, computing MSTs for the components found by $k$-centering is no longer the bottleneck. When the number of components becomes too large, the $k$-centering step and building the coarsened graph become too expensive. This is especially clear on the GreenGenes-Unaligned dataset. For this dataset we consider small subsets of size $n = 2500$ since data objects are long strings, and computing Levenshtein distances is expensive. When $t = 256$, the average partition size for the initial forest is roughly 10 points. In this case, \textsf{MFC-Approx} is extremely expensive. This algorithm is designed for situations where $t$ is asymptotically much smaller than $n$, hence Table~\ref{tab:realdata_detailed} illustrates the decrease in performance we would naturally expect when $t$ is too large.

One other exception to the general runtime trends is that for small values of $t$, running the MFC framework on the Names-US dataset is in fact slower than computing an optimal MST. For this dataset, we found that the $k$-centering step tends to form a disproportionately large component, and finding an MST of a very large component takes nearly as long as finding an MST of the entire dataset.
Combining this with the time taken on other steps in our MFC framework can lead to a larger overall runtime. This behavior (having a single large cluster) is likely tied to the fact that we are considering Levenshtein distances for short strings. This means that distances will tend to be small integer values, and many pairs of points will have the exact same distance. Our simple $k$-centering implementation can lead to imbalanced clusters in this setting where many points can be equidistant from multiple cluster center, and then one cluster is given all the points that could have just as easily gone to another cluster. Even despite this behavior, we note that runtimes do continue to increase as $t$ increases, and our MFC framework is faster for the largest $t$ values. These results could also be improved using simple strategies to avoid cluster imbalance.

\section{Conclusions and Discussion}
We have presented a new framework for finding spanning trees in arbitrary metric spaces, which is highly scalable and grounded in rigorous approximation guarantees. 
This framework is based on completing an initial forest, which can be obtained efficiently using  practical heuristics. This paper focuses on serial implementations and theoretical guarantees, as our framework already provides many advantages in this setting. At the same time, our work is strongly motivated by massive-scale clustering applications that require high-performance computing capabilities, and the algorithm we developed is highly parallelizable. A natural direction for further research is to develop parallel versions of our algorithm that can be run on a much larger scale. There are also many remaining questions in the serial setting. One direction is to try to improve on the $(\sqrt{5} + 3)/2$-approximation guarantee while still using subquadratic time, or obtain an approximation with better dependence on the $\gamma$-overlap parameter. Another direction is to prove lower bounds for the best possible approximation guarantees for subquadratic algorithms. There are also many opportunities to explore more efficient and practical methods for obtaining the initial forest that serves as the input to MFC.

	
	\bibliographystyle{plain}
	\bibliography{mst-bib}

    \appendix

\end{document}